\documentclass[a4paper,onecolumn,10pt,unpublished]{quantumarticle}
\pdfoutput=1
\usepackage[utf8]{inputenc}
\usepackage[english]{babel}
\usepackage[T1]{fontenc}
\usepackage{amsmath}
\usepackage{amssymb}
\usepackage{amsthm}
\usepackage[colorlinks=false,hidelinks]{hyperref}
\usepackage{braket}
\usepackage{cleveref}
\usepackage{centernot}
\usepackage{enumerate}
\usepackage{thmtools}
\usepackage{thm-restate}
\usepackage{subcaption} 
\usepackage{suffix}

\usepackage{tikz}
\usepackage{lipsum}

\usepackage[numbers,sort&compress]{natbib}

\usepackage{printlen}

\newcommand{\cH}{\mathcal{H}}
\newcommand{\U}[1]{\mathcal{U}(#1)}
\newcommand{\PU}[1]{\mathcal{PU}(#1)}
\DeclareMathOperator{\Tr}{Tr}

\DeclareMathOperator{\argmin}{argmin}
\newcommand{\imp}{\Lambda}

\newcommand{\lanom}{\Omega}
\newcommand{\maxent}{\Phi}
\newcommand{\fsU}[2]{\gamma_{U,#2}(#1)}
\newcommand{\Wset}[1]{\mathcal{W}_{U,#1}}

\newcommand{\arb}{\alpha}

\declaretheorem[name=Lemma]{lemma}

\theoremstyle{definition}

\newcommand{\circone}{\emph{no-entanglement} }
\WithSuffix\newcommand\circone*{\emph{no-entanglement}}
\newcommand{\circfour}{\emph{$CR_X$-entanglement} }
\WithSuffix\newcommand\circfour*{\emph{$CR_X$-entanglement}}
\newcommand{\circnine}{\emph{$CZ$-entanglement} }
\WithSuffix\newcommand\circnine*{\emph{$CZ$-entanglement}}
\newcommand{\circfifteen}{\emph{circular-entanglement} }
\WithSuffix\newcommand\circfifteen*{\emph{circular-entanglement}}

\begin{document}

\title{Loss Behavior in Supervised Learning with Entangled States}

\author{Alexander Mandl}
\orcid{0000-0003-4502-6119}
\email{mandl@iaas.uni-stuttgart.de}
\author{Johanna Barzen}
\orcid{0000-0001-8397-7973}
\email{barzen@iaas.uni-stuttgart.de}
\author{Marvin Bechtold}
\orcid{0000-0002-7770-7296}
\email{bechtold@iaas.uni-stuttgart.de}
\author{Frank Leymann}
\orcid{0000-0002-9123-259X}
\email{leymann@iaas.uni-stuttgart.de}
\author{Lavinia Stiliadou}
\orcid{0009-0001-0957-6108}
\email{stiliadou@iaas.uni-stuttgart.de}
\affiliation{Institute of Architecture of Application Systems, University of Stuttgart, Universitätsstraße 38, 70569 Stuttgart, Germany}

\maketitle

\begin{abstract}
Quantum Machine Learning (QML) aims to leverage the principles of quantum mechanics to speed up the process of solving machine learning problems or improve the quality of solutions.
Among these principles, entanglement with an auxiliary system was shown to increase the quality of QML models in applications such as supervised learning.
Recent works focus on the information that can be extracted from entangled training samples and their effect on the approximation error of the trained model.
However, results on the trainability of QML models show that the training process itself is affected by various properties of the supervised learning task.
These properties include the circuit structure of the QML model, the used cost function, and noise on the quantum computer.
To evaluate the applicability of entanglement in supervised learning, we augment these results by investigating the effect of highly entangled training data on the model's trainability.
In this work, we show that for highly expressive models, i.e., models capable of expressing a large number of candidate solutions, the possible improvement of loss function values in constrained neighborhoods during optimization is severely limited when maximally entangled states are employed for training. 
Furthermore, we support this finding experimentally by simulating training with Parameterized Quantum Circuits (PQCs).
Our findings show that as the expressivity of the PQC increases, it becomes more susceptible to loss concentration induced by entangled training data. 
Lastly, our experiments evaluate the efficacy of non-maximal entanglement in the training samples and highlight the fundamental role of entanglement entropy as a predictor for the~trainability.
\end{abstract}

\section{Introduction}\label{sec:intro}
Quantum Machine Learning~(QML) uses quantum computers or quantum-inspired methods to solve machine learning problems~\cite{Schuld2018,Biamonte2017,Huang2021}.
In QML, quantum computers are applied to various tasks, such as classification~\cite{Rebentrost2014,Farhi2018} or combinatorial optimization~\cite{Farhi2014,Egger_2021}.
Furthermore, the applicability of proven methods from classical machine learning, such as supervised learning of transformations using labeled training data on quantum computers, is explored~\cite{Schuld2018,Poland2020,Sharma2020,Khatri2019}.
In this work, we study supervised learning of quantum operators.
Herein, the aim is to approximate an unknown quantum operator, the \emph{target operator}, using a set of inputs with their associated outputs of the target operator as training samples.
To achieve this goal, a QML model such as a Parameterized Quantum Circuit~(PQC)~\cite{Benedetti2019a} is continuously adapted until its outputs, when applied to the training sample inputs, match the target operator's outputs as accurately as possible. 
This approach can be applied, for example, to reduce the circuit complexity of the approximated operator~\cite{Sharma2020} or to learn the dynamics of a physical process on a quantum computer~\cite{Huang2022,Caro2023,Volkoff2021}.

During the training process in supervised learning, a classical optimizer minimizes a loss function that quantifies the deviation between the model's output and the expected output, as determined by the given training samples. 
After concluding the training process, the quality of a learned transformation can be measured as the average loss over all possible, even unseen, input states and their outputs according to the target transformation~\cite{Poland2020,Sharma2022,Caro2023}.
This quantity is referred to as the \emph{risk}. 
The composition of the training samples plays a major role in determining the risk of a learned transformation.
For example, it was shown that, under certain conditions, introducing entanglement with an auxiliary system, the \emph{reference system}, can improve the risk~\cite{Sharma2022,Wang2023}.
Particularly, repeated access to a single maximally entangled training sample provides enough information about a unitary operator to reproduce it exactly~\cite{Sharma2022}. 
Furthermore, even non-maximal entanglement is beneficial as increasing the degree of entanglement of a training sample, as measured by its Schmidt rank, decreases the best-case risk after training if certain mathematical properties of the training samples are adhered to~\cite{Mandl2023reducing}. 

While these results suggest that introducing entanglement is desirable for improving the risk, they primarily focus on the achievable performance after training, assuming the training process successfully finds a suitable solution.
However, the training process itself is affected by various factors in the problem setup.~%
These include the expressivity of the PQC that is trained~\cite{Holmes2022}, the type of readout operators employed~\cite{Larocca2025,Cerezo2021a}, or possible noise in the circuit execution~\cite{Wang2021}.
Furthermore, the training process can be affected by the entanglement among the qubits within the training samples themselves, even without introducing an external reference system~\cite{Leone2024,Thanasilp2023}.
Thus, although the risk is reduced when entanglement with a reference system is employed, this improvement might come with an increase in training complexity. 
Therefore, in this work, we aim to evaluate the complexity of the training process when entanglement with a reference system is employed. 
Recent works have evaluated the training complexity by investigating the structure of the landscape described by the loss function~\cite{Stiliadou2024,Larocca2025,Holmes2022,Arrasmith2022}.
Certain features, such as flat regions in the loss landscape, are suspected to be detrimental to the optimization performance, as they hinder the optimizer from discerning the optimal direction for optimization~\cite{Larocca2025}.
In particular, loss landscapes can exhibit \emph{barren plateaus}, wherein the loss differences diminish exponentially with the number of qubits~\cite{Mhiri2025,Larocca2025}.

While barren plateaus are generally described as a global phenomenon of the loss function, recent works evaluate local neighborhoods in the loss landscape~\cite{Mhiri2025,Mastropietro2023}.
This allows for specifying regions of the loss landscape that are suitable for optimization, even though the loss function might exhibit flat regions in general. 
We employ this approach in this work by evaluating the concentration of the loss function values in local neighborhoods given by a metric on the set of admissible operators. 
In particular, we quantify the largest possible improvement in loss function values based on the training sample that is used.~%
By abstracting the optimization procedure to the case of optimizing over all unitary operators, we show that the loss improvement decreases exponentially with the number of qubits when a maximally entangled sample is used.
Furthermore, we evaluate this result experimentally using a selection of PQCs. 
These PQCs are trained while restricting the admissible solutions to a neighborhood of the starting point of the optimization. 
These experiments show that highly expressive PQCs are negatively affected by the high degrees of entanglement in the training samples the most.
Lastly, our experiments suggest that this effect can be mitigated when non-maximally entangled samples of high Schmidt rank are used.
This presents a promising avenue for selecting suitable training samples for quantum supervised learning.

In the remainder of this paper, we first introduce the required concepts for our results in \Cref{sec:background} and present an example as motivation, highlighting the effect of entanglement on the loss landscape in \Cref{sec:motivation}. 
We proceed by presenting our analytical results on the effect of maximally entangled training samples on the loss function in \Cref{sec:analytical}.
These results, as well as the effect of non-maximal entanglement, are experimentally evaluated in \Cref{sec:exp} and discussed in \Cref{sec:discussion}.
Lastly, we present related work on the role of entanglement in supervised learning in \Cref{sec:rel_work} and summarize our results in \Cref{sec:conclusion}.

\section{Background}\label{sec:background}
This section provides an overview of the mathematical preliminaries relevant to this work and presents the supervised learning problem.
To quantify the entanglement present in the training samples, we employ two different entanglement measures, the Schmidt rank and the entanglement entropy.
These measures are introduced in \Cref{sec:schmidt_decomp_majorization}.
In particular, the Schmidt rank was used in recent work to study the risk in supervised learning on a quantum computer~\cite{Sharma2020}.
The results are summarized in \Cref{sec:supervised_learning} along with the general problem description of supervised learning with entangled states. 
\Cref{sec:metrics} proceeds to define metrics on quantum states and on quantum circuits, which are used repeatedly for our analytical results.
Lastly, the training process in supervised learning requires a model that can be adjusted via classical parameters, and we use PQCs for this task. 
Thus, \Cref{sec:pqcs} presents a general description of PQCs and highlights the concept of expressivity.

\subsection{Entanglement in Pure Quantum States}\label{sec:schmidt_decomp_majorization}

Any pure quantum state $\ket{\arb}$ of a tensor product Hilbert space $\cH_X \otimes \cH_R$ with dimension $d = \max(\dim(\cH_X), \dim(\cH_R))$ can be expressed using its Schmidt decomposition~\cite{nielsen2002quantum}
\begin{align}
    \ket{\arb} = \sum_{j=1}^d \sqrt{c_j} \ket{x_j}_X \otimes \ket{y_j}_R\label{eq:schmidt_decomp},
\end{align}
with orthonormal \emph{Schmidt bases} $\{\ket{x_j}\}_{j=1}^d \subseteq \cH_X$, $\{\ket{y_j}\}_{j=1}^d \subseteq \cH_R$ and $c_j \in \mathbb{R}_{\geq 0}$ with the normalization condition $\sum_{j=1}^d c_j = 1$.
The individual $\sqrt{c_j}$ are referred to as the \emph{Schmidt coefficients}, and the number $r$ of nonzero Schmidt coefficients defines the \emph{Schmidt rank}~\cite{nielsen2002quantum}.
The reference system $\cH_R$ is required to be large enough to express $r$ basis states. 
If not specified otherwise, we use a reference system of the same dimension as $\cH_X$ and refer to their dimension as $d = \dim(\cH_X) = \dim(\cH_R)$.
Furthermore, for an $n$-qubit system, $d=2^n$.
A state $\ket{\maxent} \in \cH_X \otimes \cH_R$ of maximal Schmidt rank with equal Schmidt coefficients is \emph{maximally entangled}:
\begin{align}
    \ket{\maxent} = \frac{1}{\sqrt{d}} \sum_{j=1}^d\ket{x_j}_X \otimes \ket{y_j}_R. \label{eq:def_max_ent} 
\end{align}

While the Schmidt rank is solely based on the number of nonzero Schmidt coefficients, other entanglement measures take the values of the Schmidt coefficients into account.
One such measure used in our experiments is the \emph{entanglement entropy} $E(\ket{\arb})$~\cite{Bennett1996}.
For a pure state $\ket{\arb} \in \cH_X \otimes \cH_R$, it is defined as the von Neumann entropy $E(\ket{\arb}) := S(\rho_X) = -\Tr(\rho_X\ln(\rho_X))$ of the operator $\rho_X = \Tr_R(\ket{\arb}\!\bra{\arb})$ after tracing out the reference system~\cite{Bennett1996,Bengtsson2017}.
Using \Cref{eq:schmidt_decomp}, the entanglement entropy is calculated as the entropy of the squared Schmidt coefficients $c_j$~\cite{Bennett1996}
\begin{align}
    E(\ket{\arb}) = - \sum_{j=1}^d c_j \ln(c_j)\label{eq:ent_entropy}, 
\end{align}
and is minimal at $E(\ket{\arb}) = 0$ for separable states and maximal at $E(\ket{\arb}) = \ln(d)$ for maximally entangled states.
Furthermore, entangled quantum states with $0 < E(\ket{\arb}) < E(\ket{\maxent})$ are referred to as \emph{non-maximally entangled}~(NME).

Throughout this work, we denote separable quantum states as $\ket{\psi} \in \cH_X \otimes \cH_R$ and, if the factorization is irrelevant, we assume $\ket{\psi} \in \cH_X$.
Furthermore, in our results we denote any maximally entangled state (\Cref{eq:def_max_ent}), regardless of the used Schmidt basis, as $\ket{\maxent} \in \cH_X \otimes \cH_R$ and use $\ket{\arb}\in\cH_X \otimes \cH_R$ for results and definitions that apply to arbitrary states on the product space.
If the Hilbert space is clear from the context, we further omit the subscripts $X$ and $R$ denoting the respective systems in the factorization.

\subsection{Supervised Learning of Unitary Operators}\label{sec:supervised_learning}

The QML task evaluated in this work is the replication of unitary operators using supervised learning.
In this setup, a quantum circuit $V$ (the \emph{hypothesis operator}) is iteratively adapted during the training process to match an unknown target operator $U : \cH_X \to \cH_X$ on the Hilbert space $\cH_X$ by making use of training samples.
These training samples are tuples of quantum input states $\ket{\arb} \in \cH_X$ with their respective output $U\!\ket{\arb}$ after the application of the operator $U$.

The supervised learning process aims to find a hypothesis operator that matches the target operator on the training samples. 
In other words, the output states of the hypothesis operator should match those of the target operator when evaluated on the same input states.
As the loss function, a measure of dissimilarity between training sample output $U\!\ket{\arb}$ and hypothesis output $V\!\ket{\arb}$ given by the complement of the \emph{fidelity} $F(\ket{a}, \ket{b}) = \left| \braket{a|b} \right|^2$~\cite{nielsen2002quantum} is used:
\begin{align}
    L_{U,\arb}(V) = 1 - F(U\!\ket{\arb}, V\!\ket{\arb}).\label{eq:no_entanglement_loss}
\end{align}
In cases where multiple training samples are available, the average of this function is used.

Minimizing the loss function on all training samples, therefore, maximizes the fidelity of the respective outputs with respect to all training samples. 
Beyond minimizing the training loss, the goal is to ensure generalization of the hypothesis operator to unseen input states.
The generalization error is captured by the risk function~\cite{Sharma2022,Mandl2023reducing} defined as
\begin{align}\label{eq:risk_function}
    R_U(V) = \mathbb{E}_{\ket{\arb}}\left[ L_{U,\arb} (V) \right].
\end{align}
Herein, the expectation value is taken with respect to the Haar measure~\cite{Mele2023,Puchala_2017} on the set of quantum states $\ket{\arb} \in \cH_X$.
Thus, the risk function measures the expected loss when averaged over all possible inputs to the target operator.

For this training setup, it was shown that introducing entanglement in the training samples is beneficial for minimizing the risk of the resulting operator~\cite{Sharma2022,Volkoff2021}.
This is achieved through entanglement with an auxiliary system $\cH_R$.
Thus, the entangled training sample inputs are elements of the product space $\cH_X \otimes \cH_R$.
The expected outputs for each training sample are given by applying~$U$ to $\cH_X$ and leaving the auxiliary system unchanged: $(U_X \otimes I_R)\ket{\arb}$.
Similarly, during training, the hypothesis operator is applied to $\cH_X$ only, which leads to the adapted loss function
\begin{align}\label{eq:loss_function}
    L_{U,\arb}(V) = 1 - F((U_X \otimes I_R)\!\ket{\arb}, (V_X \otimes I_R)\!\ket{\arb}).
\end{align}
In particular, for $d=\dim(\cH_X)$, sets of training samples $S$ of size $t$ and the degree of entanglement given by the Schmidt rank $r$ (see also \Cref{sec:schmidt_decomp_majorization}) of each sample, Sharma~et~al.~\cite{Sharma2022} showed
\begin{align}
    \mathbb{E}_U \left[ \mathbb{E}_S \left[ R_U(V) \right] \right] \geq 1- \frac{r^2 t^2 +d +1}{d(d+1)}.\label{eq:qnfl_theorem}
\end{align}
The expectation value is taken over the Haar measure of all unitary operators acting on $\cH_X$ and all possible sets of training samples of size $t$~\cite{Sharma2022}.
This result implies that increasing the Schmidt rank of the training samples reduces the lower bound for the risk in the same way as increasing the number of training samples does. 
In particular, if the training loss of the hypothesis $V$ is zero and a training sample of maximal Schmidt rank ($r=d$) is used, then $R_U(V) = 0$, i.e., it is possible that the hypothesis perfectly replicates the target operator~\cite{Sharma2022, Mandl2023reducing}.

\subsection{Metrics on States and Operators}\label{sec:metrics}

The loss function $L_{U,\arb}(V)$ is defined as the infidelity of the expected output of the target transformation $(U_X \otimes I_R)\ket{\arb}$ and the actual output $(V_X \otimes I_R) \ket{\arb}$ of the model $V$ that is trained.
The fidelity is related to a distance measure on the set of quantum states, 
the \emph{Fubini-Study distance}~\cite{Bengtsson2017}
\begin{align}
    \gamma(\ket{a}, \ket{b}) := \arccos\left( \sqrt{\frac{\left|\braket{a|b}\right|^2}{\braket{a|a}\braket{b|b}}}\right) = \arccos\left( \sqrt{\frac{F(\ket{a},\ket{b})}{\braket{a|a}\braket{b|b}}}\right)\label{eq:fs_distance}
\end{align}
on the space $\mathbb{C}\mathrm{P}^{d-1}$, the space of vectors in $\mathbb{C}^d$ under equivalence regarding global phases.
The explicit normalization of the inner product in the definition in \Cref{eq:fs_distance} is omitted if the arguments are assumed to be normalized quantum states.
This distance measure can be interpreted as the angle between two quantum states~\cite{nielsen2002quantum} and is also referred to as the \emph{Bures angle} or \emph{quantum angle}~\cite{Wu2020}.
It takes the value $\gamma(\ket{a},\ket{b}) = 0$ for states that are equal up to global phase differences and $\gamma(\ket{a},\ket{b}) = \pi/2$ for orthogonal states.

Let $F_{U,\arb}(V) := F((U_X \otimes I_R)\!\ket{\arb}, (V_X \otimes I_R)\!\ket{\arb}) = 1-L_{U,\arb}(V)$ be a shorthand for the fidelity between expected and actual output and let $\fsU{V}{\arb} := \arccos\left(\sqrt{F_{U,\arb}(V)}\right)$ refer to the associated Bures angle.
The loss function in supervised learning is equivalently expressed using $\fsU{V}{\arb}$ as 
\begin{align}
    L_{U,\arb}(V) &= 1-F_{U,\arb}(V)\\
    &= 1-\cos(\fsU{V}{\arb})^2\\
    &= \sin(\fsU{V}{\arb})^2.\label{eq:loss_sinus}
\end{align}
Therefore, during training, the optimizer aims to decrease the Bures angle between the model output and the expected output.
Because of this close relationship between the fidelity $F_{U,\arb}(V)$, the loss function $L_{U,\arb}(V)$, and the Bures angle $\fsU{V}{\arb}$, these reformulations are used repeatedly in~\Cref{sec:analytical}.

A supervised learning algorithm aims to find a unitary operator that minimizes this loss function with respect to the given training samples. 
Since the loss function is invariant under the multiplication of $V$ with a global phase factor of the form $e^{i\rho}$, this learning problem is therefore a task of finding a suitable unitary operator up to global phase differences. 
Thus, although the solution set is the set of all unitary operators $\U{d}$, we compare them in our analytical results by a metric on the set $\PU{d}$, which is the set of unitary operators with equivalence under multiplication of global phase factors~\cite{Hall2013}.
On this set, metrics can be defined by minimization over these global phase factors
\begin{align}
    d'(U,V) = \min_{\rho \in \mathbb{R}} d(U,e^{i\rho}V),
\end{align}
using a suitable metric $d(U,V)$, such as the Frobenius norm distance, on the set of unitary operators~\cite{Haah2023}.
This metric, in the following referred to as $d_F'(U,V)$, is derived from the Frobenius norm $\lVert X\rVert_F = \sqrt{\braket{X,X}_F} = \sqrt{\Tr(X^\dagger X)}$.
Thus,
\begin{align}
    d_F'(U,V) &:= \min_{\rho \in \mathbb{R}} d_F(U,e^{i\rho}V)\\
    &= \min_{\rho \in \mathbb{R}} \lVert U-e^{i\rho}V\rVert_F  \\
    &= \min_{\rho \in \mathbb{R}} \sqrt{ \Tr(U^\dagger U) + \Tr(V^\dagger V) - e^{i\rho} \Tr(U^\dagger V) - e^{-i\rho} \Tr(V^\dagger U)}\\
    &= \min_{\rho \in \mathbb{R}} \sqrt{ 2d - e^{i\rho} \Tr(U^\dagger V) - \overline{e^{i\rho} \Tr(U^\dagger V)} } \\
    &= \min_{\rho \in \mathbb{R}} \sqrt{2d - 2\Re(e^{i\rho}\Tr(U^\dagger V))}\\
    &= \sqrt{2d - 2\max_{\rho \in \mathbb{R}}\Re(e^{i\rho}\Tr(U^\dagger V))}\\
    &= \sqrt{2d}\sqrt{1-\frac{1}{d} \left| \Tr(U^\dagger V)\right|}\label{eq:rel_dist_trace},
\end{align}
where the last line follows since $\Re(e^{i\rho}z)$ is maximal at $\Re(e^{i\rho}z) = |z|$ for every complex number $z$.
This distance is minimal for unitary operators that only differ by a global phase with $d_F'(U,V) = 0$, and it is maximal for orthogonal operators with $d_F'(U,V) = \sqrt{2d}$.
By applying the definition of the maximally entangled state $\ket{\maxent}$ (\Cref{eq:def_max_ent}), a similar expression is obtained for the fidelity in the supervised learning problem
\begin{align}
    F_{U,\maxent}(V) &= \left| \braket{\maxent| U^\dagger V \otimes I|\maxent}\right|^2\\
    &= \left| \frac{1}{d} \sum_{j,k = 1}^d \braket{x_j| U^\dagger V | x_k} \braket{y_j | y_k} \right|^2\\
    &= \frac{1}{d^2}\left| \sum_{j=1}^d \braket{x_j | U^\dagger V | x_j}\right|^2\\
    &= \frac{1}{d^2} \left| \Tr(U^\dagger V)\right|^2\label{eq:fidelity_max_ent_trace}.
\end{align}
Therefore, 
\begin{align}
    d_F'(U,V) &= \sqrt{2d}\sqrt{1-\sqrt{F_{U,\maxent}(V)}}\label{eq:frob_relationship_fidelity}\\
    &= \sqrt{2d}\sqrt{1-\sqrt{1- L_{U,\maxent}(V)}}\label{eq:frob_relationship_loss}\\
    &= \sqrt{2d}\sqrt{1-\cos(\fsU{V}{\maxent})}\label{eq:frob_relationship}.
\end{align}
Thus, a reduction in the Bures angle $\fsU{V}{\maxent}$ between expected output and actual output when training with a maximally entangled state implies a reduction in the Frobenius norm distance between $U$ and $V$ after correcting for global phase factors. 
Furthermore, any operator $V$ with $\fsU{V}{\maxent} = 0$ has $d_F'(U,V) = 0$.
Thus, when training with a maximally entangled state, a global minimum occurs exactly for the operators $V$ that match the target operator $U$ up to a global~phase.

\subsection{Parameterized Quantum Circuits}\label{sec:pqcs}
The supervised learning task requires a way of exploring the set of possible hypothesis operators~$V$ during training. 
This is usually achieved by using Parameterized Quantum Circuits (PQCs)~\cite{Schuld2018,Sim2019,Cerezo2021} to realize an adaptable hypothesis operator that depends on classical parameters $\vec{\theta}$.
Thus, the PQC is a function $V(\vec{\theta}) : \mathbb{P} \to \U{d}$, which maps elements $\vec{\theta}$ of its \emph{parameter space} $\mathbb{P}$ to quantum circuits.
Generally, this is done using a circuit, often called an \emph{ansatz}, containing parameterized quantum gates and unparameterized gates in a predefined structure.
The parameters of the PQC are real values and for a PQC with $p$ parameterized gates, the parameter space is a set $\mathbb{P} \subseteq \mathbb{R}^p$~\cite{Barzen2025}.
Depending on the type of parameterized gates applied in the PQC, e.g.~periodic rotation gates, the parameter space is a proper subset of $\mathbb{R}^p$.
However, since the exact extent of this subset depends on the PQC, we assume $\mathbb{P} = \mathbb{R}^p$ in the following.

The structure of ansatz circuits is a field of active study~\cite{Guo2024,Cerezo2021,Sim2019} as it influences which operators can be expressed by varying $\vec{\theta}$.
A PQC can, for example, be modeled after the information that is known about the structure of the QML problem~\cite{Farhi2014,Wang2024}, or by including knowledge about the quantum hardware that the circuit is executed on~\cite{Kandala2017,Cerezo2021}.
If no information about the problem is known in supervised learning, it is generally preferred that a PQC models a large number of unitary operators. 
This increases the chances that the ansatz can express the solution to the supervised learning problem.
This concept of the expressive power of an ansatz is referred to as the \emph{expressivity}.
The evaluation and comparison of the expressivity of PQCs was studied via various methods in related work~\cite{Barzen2025}.
One approach is based on the evaluation of the dimension of the manifold given by the states that are reachable by the ansatz~\cite{Funcke2021,Barzen2025}.
Therein, a high dimension indicates a high expressivity of the PQC.
Another approach evaluates the PQC's ability to uniformly explore the Bloch sphere by adaptation of its parameters~\cite{Sim2019,Friedrich2023}.
We make use of the latter approach in this work as it can be experimentally approximated for a given PQC by sampling the distribution of output fidelities of the PQC (see also \Cref{app:expressivity}).
Following the description in~\cite{Sim2019}, this measure of expressivity is formally defined via the 
the Kullback-Leibler~(KL) divergence of the distribution of state fidelities $\hat{P}_V (F, \vec{\theta})$ of pairs of quantum states generated by the ansatz $V(\vec{\theta})$ and the distribution $P_{\text{Haar}}(F)$ of fidelities for randomly sampled quantum states
\begin{align}
    \mathrm{Expr} := D_{KL}\left( \hat{P}_V (F, \vec{\theta})\;\lVert\; P_{\text{Haar}}(F)\right).
\end{align}
Herein, low values for $\mathrm{Expr}$ imply a small divergence of the distributions and suggest a high expressive power of the ansatz.

To allow influencing the expressivity of a PQC, it is often structured in layers~\cite{Farhi2014,Sim2019}
\begin{align}
    V(\vec{\theta}) = \prod_{i=1}^l V_i(\vec{\theta}).
\end{align}
Herein, the layers $V_i$ usually contain the same quantum gates, and each layer uses a subset of the supplied parameters $\vec{\theta}$.
Thus, each layer differs only by the precise parameters it uses.
By increasing the number of layers $l$, the number of parameters $p$ is increased.
This is done with the aim of increasing the set of unitary operators that the PQC can express at the cost of circuit depth.
However, the actual increase in expressivity depends on the structure of the individual layers and has been shown to saturate after a finite number of layers for some PQCs~\cite{Sim2019}.

The loss function in supervised learning \Cref{eq:loss_function} defines a graph on the parameter space, the \emph{loss landscape}~\cite{Stiliadou2024,Wu2023}.
The geometric properties of this landscape influence the performance of the optimization process of the PQC.
For example, flat regions in the loss landscape, so-called \emph{barren plateaus}, hinder the optimizer's ability to find an optimum during training~\cite{Larocca2025}.
In particular, higher expressivity in PQCs has been linked to an increased likelihood of barren plateaus in the loss landscape~\cite{Holmes2022}.
Barren plateaus are commonly characterized using global properties of the loss landscape, such as the variance of loss gradients~\cite{Pesah2021,McClean2018,Holmes2022,Grant2019} or the concentration of the loss function values over the parameter space~\cite{Arrasmith2022,Wang2021,Crognaletti2024}.
However, some recent works aim to treat barren plateaus as a local feature of the loss landscape. 
For example, to characterize the performance of the optimization approach in~\cite{Mastropietro2023}, a barren plateau is defined as a ball in the parameter space of the PQC where gradient magnitudes vanish. 
Similarly, in~\cite{Mhiri2025} small subregions of the loss landscape are considered to identify regions of sufficiently large gradients that can, for example, be used for inferring suitable initial parameters for the optimization.
Motivated by these approaches, we aim to evaluate the optimization complexity of supervised learning with entangled training samples using the variation of the loss function values in balls in the loss landscape in our analytical results in \Cref{sec:analytical} as well as in our experiments in \Cref{sec:exp}.

\section{Motivation and Problem Statement}\label{sec:motivation}

Results on the generalization error in supervised learning, such as the bounds presented in \Cref{eq:qnfl_theorem}, encompass the fact that the loss function used during training might not match the risk function. 
Even after a hypothesis operator of minimal loss is found, its risk might not be minimal.
Thus, there still could be quantum states that are mapped to incorrect outputs by the hypothesis operator.
This implies that there are global minima in the loss function that do not correspond to a global minimum in the risk function, as long as a sufficient number of training samples is not available.
Conversely, if a sufficient amount of training samples is available (e.g., in the form of a highly entangled training sample), the loss function increasingly matches the risk function. 
This, however, might increase the complexity of the optimization process, as the set of unitary operators that minimize the loss function is smaller compared to the case of using no entanglement.

\begin{figure}[t]
    \centering
    \includegraphics[width=1\linewidth]{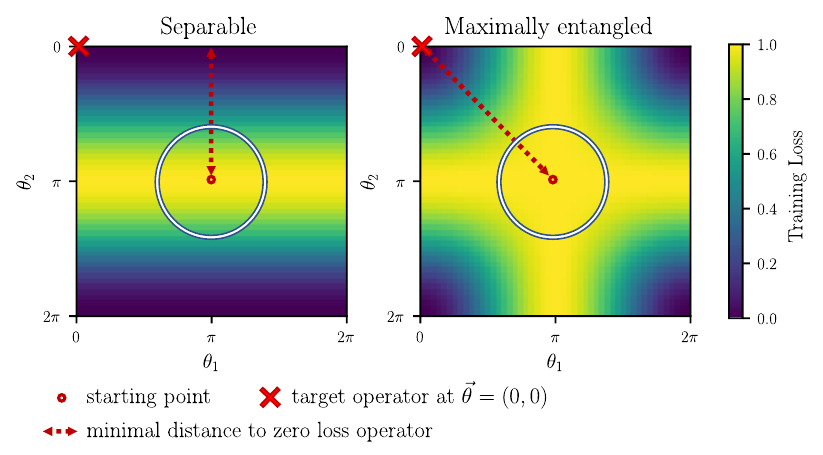}
    \caption{Comparison of loss landscapes for separable (left) and maximally entangled (right) training samples. The distance to the closest optimal solution, as determined by the used training sample, is shown by dotted arrows, with cross marks indicating the actual target operator.}
    \label{fig:small_pqc_cf}
\end{figure}

This fact is highlighted in the motivational example in \Cref{fig:small_pqc_cf}.
Herein, a two-parameter PQC $V(\theta_1, \theta_2)$, consisting only of a parameterized $X$- and $Z$-rotations on one qubit, is optimized.
The plots show the value of the loss function defined in \Cref{eq:loss_function} depending on the parameters $\vec{\theta}$ when a separable training sample (left plot) and when a maximally entangled training system (right plot) is used.
In this example, the operator $U=V(0,0)$ (marked by a cross) is the target operator for the supervised learning task.
The color at each point shows the loss function value, with darker colors indicating lower loss function values.
Starting from a randomly initialized parameter assignment for $\theta_1$ and $\theta_2$, the classical optimizer repeatedly evaluates the loss function to infer an optimization direction on this landscape.
However, for some parameter choices, the two shown loss landscapes differ, which might impact the performance of the optimizer.
As an example, consider the starting point $(\theta_1, \theta_2) = (\pi, \pi)$ highlighted in red in the center of each plot. 
For the separable training sample, the neighborhood delimited by a white circle around this point exhibits more variation in the loss function value than for the maximally entangled training sample.
Additionally, the distance to the closest global minimum on the loss landscape is larger for the maximally entangled training sample, as shown by the dotted arrows.

Together, these differences in the loss landscape for the maximally entangled training sample indicate an increase in the optimization complexity.
Thus, although using a maximally entangled state is beneficial for the risk after training, the performance of the training process might be impaired when compared to using a separable training sample.
Motivated by these observations, this work evaluates loss landscapes in supervised learning of unitary operators for larger problem sizes.
We begin by analytically describing the differences in the loss landscape between the maximally entangled state and the separable state in \Cref{sec:analytical} by assuming a maximally expressive training model.
Following the motivational example in \Cref{fig:small_pqc_cf}, we infer the variation of the loss function in a local neighborhood given by balls in a suitable metric and infer the distance to the closest global minimum on the loss landscape.
By comparing these values when a maximally entangled training sample is used and when an arbitrary separable training sample is used, we aim to infer if the possible negative effect on the optimization process in the motivational example is also apparent for larger problem instances.

In our analytical results, we assume that the model that is optimized is maximally expressive, i.e., it can explore the whole set of unitary operators to find suitable solutions to the problem.
However, as the expressivity of the used PQC might also influence the structure of the loss landscape~\cite{Holmes2022}, we augment our results by experimentally evaluating the properties of the loss landscape for a collection of PQCs and problem instances in \Cref{sec:exp}.
Following the motivational example, we simulate supervised learning of unitary operators in a constrained neighborhood. 
Thus, we analyze whether the decrease in variation of the loss function around randomly sampled starting points is also noticeable when training with a PQC as the supervised learning model.
Additionally, these experiments allow us to evaluate the effect of the degree of entanglement on the loss landscape by utilizing NME states for training.

\section{Optimizing over Unitary Operators}\label{sec:analytical}

\begin{figure*}[t!]
    \centering
        \includegraphics[scale=1]{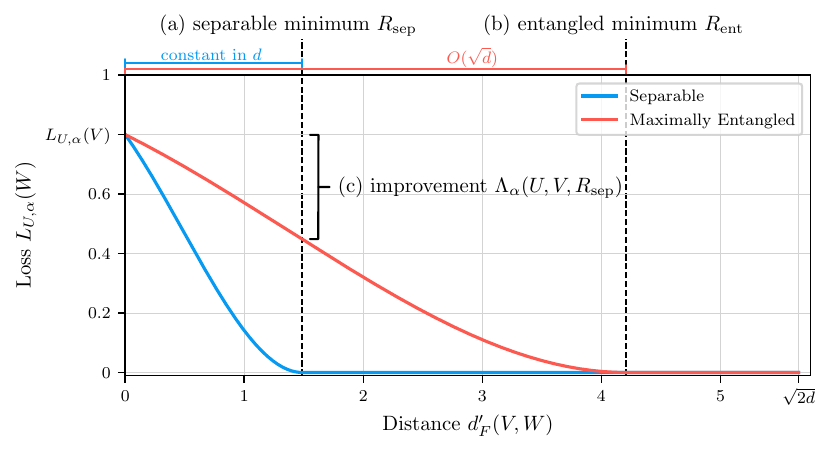}
    \caption{
    This plot summarizes the analytical results in \Cref{sec:analytical}. 
    It shows the minimal obtainable training loss for operators $W$ of distance $d_F'(V,W)$ from any operator $V$ with loss $L_{U,\arb}(V) = 0.8$ according to \Cref{thm:max_fid_theorem}.
    The minimal loss for a separable state is shown in blue, and the minimal loss for a maximally entangled state is shown in red. 
    The dashed vertical lines indicate the required distance until an operator $W$ of loss zero is found.
    For a separable state (a), this distance is constant in the dimension of the Hilbert space, while (b) it grows with $d = 2^n$ for the maximally entangled state in the worst case (\Cref{thm:min_distance_theorem}).
    Furthermore, the best possible improvement of the loss when a maximally entangled state is used is smaller than the improvement for the separable state. 
    As an example, consider the improvement for the maximally entangled training sample at (c) $d_F'(V,W) = R_{\text{sep}}$, which is shown in (\Cref{thm:same_start_value_theorem}) to be exponentially smaller than its counterpart in the separable case.}
    \label{fig:analytical_summary}
\end{figure*}

The loss function for supervised learning of unitary operators $U$, when a maximally entangled training sample is used, is closely related to the metric $d_F'(U, V)$ induced by the Frobenius norm after correcting for global phase factors (see \Cref{eq:frob_relationship_loss}).
By using this metric as a reference, we aim to evaluate the possible loss function values that are obtainable for separable training samples and compare them to the loss for maximally entangled training samples. 
Thus, we evaluate the behavior of the supervised learning loss when the set of possible hypothesis operators is assumed to be the whole set $\PU{d}$.
The main results of this evaluation are summarized in \Cref{fig:analytical_summary} and are explained in detail in the following.

\subsection{Distance to a Global Minimum}

Similar to the example in \Cref{sec:motivation}, we evaluate the distance to a global minimum on the loss landscape.
As the metric $d_F'$ can be defined using the fidelity $F_{U,\maxent}(V)$ for a maximally entangled state $\ket{\maxent} \in \cH_X \otimes \cH_R$ (\Cref{eq:frob_relationship_fidelity}), the distance between any hypothesis operator $V$ and the target operator $U$ during training directly follows from the loss at $V$.
Furthermore, since $d_F'$ is a metric on $\PU{d}$, any operator $V$ with $d_F'(U, V) = 0$ must match the target operator up to a global phase factor: $d_F'(U, V) = 0 \implies V=e^{i\rho}U$.
Thus, the loss function in supervised learning indicates the distance to any zero-risk operator $e^{i\rho}U$ when a maximally entangled training sample is used. 
However, when a separable training sample $\ket{\psi}$ is used, this relationship is not given. 
In particular, as shown in the motivation example, there might be other operators, apart from $e^{i\rho}U$, that constitute global minima for the loss function.

More generally, assuming that during the training process the algorithm produces a current hypothesis $V$, we proceed by describing the distance to any operator $W$ that satisfies a fixed target fidelity $f_W$ when a separable training sample is used.
For this task, we partition the set of possible hypothesis operators $\U{d}$ into distinct subsets $\Wset{\arb}(f_W)$ based on their fidelity.
Let $\ket{\arb}$ be an arbitrary training sample and let $f_W \in [0,1]$ be the target value for the fidelity, then 
\begin{align}
    \Wset{\arb}(f_W) := \Set{ W \in \U{d} |  F_{U,\arb}(W) = F((U^\dagger W \otimes I)\!\ket{\arb}, \ket{\arb}) = f_W }.\label{eq:def_W_set}
\end{align}
We characterize the minimal distance from the current solution $V$ to any $W \in \Wset{\psi}(f_W)$, using~$f_W$ and the fidelity of the current solution, denoted as $f_V$ as a shorthand.
\begin{restatable}{lemma}{repseparabledistance}
\label{le:separable_distance}
    Let $\ket{\psi} \in \cH_X \otimes \cH_R$ be a separable state with $\dim(\cH_X) = d$, let $U \in \U{d}$ be the target operator, and $V \in \U{d}$ the current hypothesis operator such that $F_{U,\psi}(V) = F((U^\dagger V \otimes I)\!\ket{\psi}, \ket{\psi}) = f_{V}$. For a fixed target fidelity $f_W \in [0,1]$, 
    \begin{align}
        \min_{W \in \Wset{\psi}(f_W)} d_F'(V, W) &= \sqrt{4\left(1 - \sqrt{f_V f_W} - \sqrt{(1-f_V)(1-f_W)}\right)}\\
        &= \sqrt{4 (1-\cos(\fsU{V}{\psi} - \fsU{W}{\psi}))}\label{eq:angle_reformulation},
    \end{align}
where $\fsU{V}{\psi} = \arccos(\sqrt{f_V})$ and $\fsU{W}{\psi} = \arccos(\sqrt{f_W})$.
\end{restatable}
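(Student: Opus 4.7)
\emph{Proof plan.} The plan is to reduce the problem to a constrained trace maximization over unitaries, and then solve it by splitting it into an inner step based on the singular value decomposition and an outer step based on the triangle inequality for the Fubini--Study metric~$\gamma$.

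First, I would exploit separability by writing $\ket{\psi}=\ket{\psi_X}\otimes\ket{\psi_R}$, so that $F_{U,\psi}(V)=|\braket{\psi_X|U^\dagger V|\psi_X}|^2$ depends only on $\cH_X$. Setting $N:=U^\dagger V$ and $M:=U^\dagger W$, unitary invariance of the Frobenius norm gives $d_F'(V,W)=d_F'(N,M)$, and by \Cref{eq:rel_dist_trace} minimizing $d_F'(N,M)$ is equivalent to maximizing $|\Tr(N^\dagger M)|$. Parameterizing $M=NR$ with $R\in\U{d}$ then reduces the lemma to: maximize $|\Tr(R)|$ subject to $|\braket{\phi|R|\psi_X}|^2=f_W$, where $\ket{\phi}:=N^\dagger\ket{\psi_X}$ satisfies $|\braket{\psi_X|\phi}|^2=f_V$.

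For the inner step I would fix $\ket{\rho}:=R\ket{\psi_X}$ (a unit vector with $|\braket{\phi|\rho}|^2=f_W$) and maximize $|\Tr(R)|$ over unitaries sending $\ket{\psi_X}\mapsto\ket{\rho}$. Extending $\ket{\psi_X}$ to an orthonormal basis $\{\ket{i}\}_{i=0}^{d-1}$ and choosing an orthonormal basis $\{\ket{e_j}\}_{j=1}^{d-1}$ of $\ket{\rho}^\perp$, every such $R$ is specified by a $(d-1)\times(d-1)$ unitary $U$ representing $R|_{\ket{\psi_X}^\perp}$, and the trace splits as $\Tr(R)=\braket{\psi_X|\rho}+\Tr(B^T U)$ with $B_{ji}=\braket{e_j|i}$. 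A direct computation shows $B^\dagger B=I_{d-1}-\vec{v}\vec{v}^\dagger$ for $\vec{v}_i=\braket{i|\rho}$ with $\|\vec{v}\|^2=1-|\braket{\psi_X|\rho}|^2$, so $B$ has singular values $1$ (multiplicity $d-2$) and $|\braket{\psi_X|\rho}|$ (once). Von Neumann's trace inequality then upper-bounds $|\Tr(B^T U)|$ by $(d-2)+|\braket{\psi_X|\rho}|$, and a global phase in $U$ aligns $\Tr(B^T U)$ with $\braket{\psi_X|\rho}$, yielding $\max_R|\Tr(R)|=2|\braket{\psi_X|\rho}|+(d-2)$.

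For the outer step I would maximize $|\braket{\psi_X|\rho}|$ over unit vectors $\ket{\rho}$ with $|\braket{\phi|\rho}|^2=f_W$. Since $\gamma$ is a metric on projective space (\Cref{eq:fs_distance}), the triangle inequality gives $\gamma(\ket{\psi_X},\ket{\rho})\geq|\gamma(\ket{\psi_X},\ket{\phi})-\gamma(\ket{\phi},\ket{\rho})|=|\fsU{V}{\psi}-\fsU{W}{\psi}|$, saturated by placing $\ket{\rho}$ in $\mathrm{span}(\ket{\psi_X},\ket{\phi})$ with the appropriate phase. Hence $\max|\braket{\psi_X|\rho}|=\cos(\fsU{V}{\psi}-\fsU{W}{\psi})=\sqrt{f_V f_W}+\sqrt{(1-f_V)(1-f_W)}$ by the cosine addition formula. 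Plugging back, $d_F'(V,W)^2=2d-2\max|\Tr(R)|$ collapses to $4(1-\cos(\fsU{V}{\psi}-\fsU{W}{\psi}))$, which matches both forms in the statement. I expect the main obstacle to be the inner SVD computation, in particular verifying the rank-one form of $B^\dagger B$ and exhibiting an explicit optimizer (a 2D rotation of $\mathrm{span}(\ket{\psi_X},\ket{\phi})$ extended by identity on its orthogonal complement), while the outer step and the final algebraic simplification are routine.
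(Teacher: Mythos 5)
Your proposal is correct and arrives at the exact minimum, but it takes a genuinely different route from the paper. The paper first bounds $\left|\Tr(V^\dagger W)\right| \leq 2\left|\braket{\psi|V^\dagger W|\psi}\right| + (d-2)$ by invoking an external result on the diagonal elements of unitary matrices (Tromborg--Waldenstr{\o}m), then bounds the remaining diagonal element by decomposing $U^\dagger V\ket{\psi}$ and $U^\dagger W\ket{\psi}$ into components parallel and orthogonal to $\ket{\psi}$ and applying Cauchy--Schwarz, and finally exhibits an explicit optimizer $\widetilde{W}=TV$ with $T$ a $2\times 2$ rotation in the plane spanned by $U\ket{\psi}$ and $U\ket{\gamma}$ extended by the identity. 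You instead solve the constrained maximization of $\left|\Tr(R)\right|$ exactly in two stages: the inner stage (over unitaries with prescribed image $\ket{\rho}$ of $\ket{\psi}$) via the singular values of the overlap matrix and von Neumann's trace inequality, which independently rederives the value $2\left|\braket{\psi|\rho}\right|+(d-2)$ that the paper imports from the cited inequality; and the outer stage (over admissible $\ket{\rho}$) via the triangle inequality for the Fubini--Study metric, saturated on the two-dimensional span. Your approach buys self-containedness and a clean separation of the two degrees of freedom (the output state versus the unitary extension), and each stage is tight, so you get the exact minimum without a separate achievability construction; the paper's approach buys a fully explicit optimal operator in matrix form. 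Two small points to watch when writing it up: your $B^{T}$ should be $B^{\dagger}$ (or the entries of $B$ conjugated) for the trace identity $\sum_{i\geq 1}\braket{i|R|i}=\Tr(B^{\dagger}U)$ to hold exactly, though this does not affect the singular values; and in the outer step you should spell out the explicit $\ket{\rho}=\cos(\fsU{V}{\psi}-\fsU{W}{\psi})e^{i\mu}\ket{\psi}+\sin(\fsU{V}{\psi}-\fsU{W}{\psi})\ket{\psi^{\perp}}$ that saturates the triangle inequality, including the case $\fsU{W}{\psi}>\fsU{V}{\psi}$.
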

\noindent The proof for this lemma is given in \Cref{app:cost_results}.

Setting the target value for the fidelity $f_W = 1$ in \Cref{le:separable_distance} allows us to infer the minimal distance to a global minimum of the loss function if a separable state is used for training:
\begin{align}
    \min_{W \in \Wset{\psi}(1)} d_F'(V, W) = \sqrt{4 (1-\sqrt{F_{U,\psi}(V)})} = \sqrt{4(1-\cos(\fsU{V}{\psi})}.\label{eq:mindf_sep}
\end{align}
According to the definition of $d_F'$ in \Cref{sec:metrics}, its maximal value $\sqrt{2d} = \sqrt{2 \cdot 2^n}$ grows exponentially with the number of qubits $n$.
However, the distance to any global maximum of the fidelity when a separable state is used for training stays independent of the dimension according to \Cref{eq:mindf_sep}.
This contrasts with the maximally entangled case, as the distance according to \Cref{eq:frob_relationship_fidelity} grows with the dimension $d$.
This suggests that although there is only a single global minimum in $\PU{d}$ in the maximally entangled case, there are numerous global minima in the separable case: \Cref{le:separable_distance} shows that one of these minima must lie within a constant distance of any point in the loss landscape.
However, the distance to these global minima still depends on the fidelity $F_{U,\psi}(V)$ and $F_{U,\maxent}(V)$ at the current solution $V$, which can differ between training samples.
Therefore, these two distances are not immediately comparable.
For this reason, we proceed by comparing the minimal distance to an optimum for separable and maximally entangled samples by distinguishing cases based on the loss at $V$.

\begin{restatable}{theorem}{repmindistancetheorem}
\label{thm:min_distance_theorem}
Let $\ket{\psi} \in \cH_X \otimes \cH_R$ be a separable and $\ket{\maxent} \in \cH_X \otimes \cH_R$ a maximally entangled training sample for $\dim(\cH_X) = d=2^n$ for $n$ qubits.
For the target operator $U \in \U{d}$ and the current hypothesis $V \in \U{d}$ define
\begin{align}
    W_{\psi} := \argmin_{W \in \Wset{\psi}(1)} d_F'(V, W)
\end{align}
and 
\begin{align}
    W_{\maxent} := \argmin_{W \in \Wset{\maxent}(1)} d_F'(V, W)
\end{align}
as the operators that are closest to the current hypothesis $V$ among those with maximal fidelity (and minimal loss) to the target operator $U$ for each sample. 
Then the following statements hold:
\begin{enumerate}[(i)]
    \abovedisplayskip=-\baselineskip
    \abovedisplayshortskip=-\baselineskip
        \item \begin{flalign}
        d_F'(V, W_{\psi}) \leq d_F'(V, W_{\maxent})
        &&
        \end{flalign}
    \abovedisplayskip=+\baselineskip
    \abovedisplayshortskip=+\baselineskip
        \item 
        If ${L_{U,\maxent}(V)} \in \lanom\left(\frac{1}{\mathrm{poly}(n)}\right)$, then
        \begin{align}
            \frac{d_F'(V, W_{\psi})}{d_F'(V, W_{\maxent})} \in \mathcal{O}\left(\frac{1}{\sqrt{d}} \right) = \mathcal{O}\left(\frac{1}{2^{n/2}}\right).
        \end{align}
    \item 
        If $\frac{d_F'(V, W_{\psi})}{d_F'(V, W_{\maxent})} \in \lanom\left( \frac{1}{\mathrm{poly}(n)}\right)$, then
        \begin{align}
            L_{U,\maxent}(V) \in \mathcal{O}\left(\frac{1}{2^n} \right).
        \end{align}
\end{enumerate}
\end{restatable}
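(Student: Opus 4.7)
The plan is to reduce both minimum distances to closed-form expressions in the fidelities and then treat the three claims as short analytic consequences. First, I would observe that any $W \in \Wset{\maxent}(1)$ satisfies $F_{U,\maxent}(W) = 1$, and by \Cref{eq:frob_relationship_fidelity} this forces $d_F'(U, W) = 0$, so $\Wset{\maxent}(1)$ is the single equivalence class in $\PU{d}$ containing $U$. Consequently $d_F'(V, W_{\maxent}) = d_F'(V, U) = \sqrt{2d}\sqrt{1-\sqrt{1-L_{U,\maxent}(V)}}$ by \Cref{eq:frob_relationship_loss}. For the separable side, \Cref{le:separable_distance} with $f_W = 1$ collapses to $d_F'(V, W_{\psi}) = 2\sqrt{1-\sqrt{F_{U,\psi}(V)}}$, which is in particular bounded above by $2$ independently of $V$, $U$, and the separable sample $\ket{\psi}$.

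For (i), I would note that $U$ lies in $\Wset{\psi}(1)$ for every separable $\ket{\psi}$ because $F_{U,\psi}(U) = 1$. Since $W_{\psi}$ minimizes $d_F'(V, \cdot)$ over $\Wset{\psi}(1)$, this immediately gives $d_F'(V, W_{\psi}) \leq d_F'(V, U) = d_F'(V, W_{\maxent})$, which is the claim. No asymptotic argument is needed here.

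For (ii), the key step is to combine the universal upper bound $d_F'(V, W_{\psi}) \leq 2$ with a lower bound on $d_F'(V, W_{\maxent})$ that grows with $d$. Applying the elementary inequality $1 - \sqrt{1 - x} \geq x/2$ on $[0,1]$ to the closed-form expression above yields $d_F'(V, W_{\maxent}) \geq \sqrt{d \cdot L_{U,\maxent}(V)}$. Substituting the hypothesis $L_{U,\maxent}(V) \in \lanom(1/\mathrm{poly}(n))$ into this lower bound and dividing through by $2$ gives the ratio in the claimed asymptotic class, exponentially suppressed in $n$ up to polynomial factors. For (iii), I would essentially invert the same pair of bounds: they imply $d_F'(V, W_{\psi})/d_F'(V, W_{\maxent}) \leq 2/\sqrt{d \cdot L_{U,\maxent}(V)}$, so assuming the left-hand side is at least $1/q(n)$ for some polynomial $q$ and solving for $L_{U,\maxent}(V)$ produces the upper bound $L_{U,\maxent}(V) \leq 4 q(n)^2 / d$, which gives the stated exponential decay.

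The main obstacle is not any individual inequality but the careful bookkeeping of the asymptotic notation across (ii) and (iii): I need to verify that the polynomial factors introduced by the $\lanom(1/\mathrm{poly}(n))$ hypotheses are indeed absorbed cleanly into the exponential $1/d = 1/2^n$ scaling in the sense the authors intend, and that the elementary approximation $1-\sqrt{1-x} \geq x/2$ is not so loose that it weakens the claim in the regime where $L_{U,\maxent}(V)$ itself is already small. If needed, a matching two-sided bound $x/2 \leq 1-\sqrt{1-x} \leq x$ can be used to sharpen both directions simultaneously.
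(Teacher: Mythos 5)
Your proposal is correct and follows the same overall skeleton as the paper's proof: part (i) is argued identically (the zero-loss operator for $\ket{\maxent}$ is $U$ up to phase, hence lies in $\Wset{\psi}(1)$, so the minimum over that larger-feasible set can only be smaller), and parts (ii) and (iii) both come down to combining the constant upper bound $d_F'(V,W_\psi)\leq 2$ from \Cref{le:separable_distance} with a lower bound on $d_F'(V,W_\maxent)$ that scales like $\sqrt{d\cdot L_{U,\maxent}(V)}$. Where you diverge is in how that lower bound is produced: the paper passes to Bures angles and invokes the trigonometric inequalities $1-\cos\gamma\geq \tfrac{4}{\pi^2}\gamma^2$ and $\gamma\geq\sin\gamma$, whereas you stay with the fidelity and use the algebraic inequality $1-\sqrt{1-x}\geq x/2$, which is more direct and avoids the angle bookkeeping entirely. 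Your concern about the leftover $\mathrm{poly}(n)$ factor in (ii) applies equally to the paper's own argument (they absorb $\mathrm{poly}(n)/2^{n/2}$ into $\mathcal{O}(2^{-n/2})$ in the same informal way), so you are not losing anything there. Notably, in (iii) your route is actually sharper than what the paper writes down: solving $1/q(n)\leq 2/\sqrt{d\,L_{U,\maxent}(V)}$ gives $L_{U,\maxent}(V)\leq 4q(n)^2/d$, i.e.\ the claimed $\mathcal{O}(1/2^n)$ decay, whereas the paper's final step uses $\sin(\gamma)^2\leq\gamma$ with $\gamma\in\mathcal{O}(1/\sqrt{d})$ and as written only yields $\mathcal{O}(1/2^{n/2})$ --- an apparent slip (using $\sin(\gamma)^2\leq\gamma^2$ would recover the stated bound), which your algebraic version sidesteps.
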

\noindent The proof of this theorem is given in \Cref{app:cost_results}. 
Herein, we denote as $\mathrm{poly}(n)$ any polynomial in $n$ of finite degree.

The first statement emphasizes that the distance to a global minimum generally does not decrease if entangled training samples are used.
Thus, even if the loss at the current hypothesis $V$ is smaller for the maximally entangled training sample than for the separable training sample, the global minimum might still be more distant. 
The second statement concretizes this observation and is also indicated in \Cref{fig:analytical_summary}: as long as the loss for the maximally entangled sample is polynomially bounded from below in the number of qubits~$n$, the ratio between the distances to a minimum for the investigated training samples is exponentially small. 
Thus, a global minimum, when $\ket{\maxent}$ is used, is exponentially farther away.
The third statement evaluates the converse setting. 
If we find that the ratio of the distances to a global minimum is not exponentially small, herein expressed as polynomially bounded from below, then we can guarantee the current solution $V$ is already exponentially close to the target unitary $U$.
In other words, the only situation where we can ensure that the distance to a global minimum for $\ket{\maxent}$ is not exponentially large when compared to the distance to a global minimum for separable $\ket{\psi}$ is when $V$ is already exponentially close to $U$.

\subsection{Improvement in Local Neighborhood}\label{sec:improvement_analytical}

\Cref{le:separable_distance} describes the required distance to operators $W$ with a specific loss function value. 
This allows us to determine the distance $d_F'$ that needs to be covered by the optimizer to reach a hypothesis operator with a fixed loss function value.
However, it does not directly capture the properties of the immediate neighborhood of the current solution $V$, such as the variability of the loss function or the reachable minima in a neighborhood.
Thus, similar to the motivational example in \Cref{sec:motivation}, the following result specifies the properties of neighborhoods of an operator $V \in \U{d}$.
We describe these neighborhoods as balls with respect to the metric $d_F'$.

\begin{restatable}{lemma}{repmaxfidtheorem}\label{thm:max_fid_theorem}
    Let $B(V, R) := \Set{W \in \PU{d}|d_F'(V,W) \leq R}$ be a ball in $\PU{d}$ centered at the current hypothesis $V$ with radius $R$.
    For any separable state $\ket{\psi} \in \cH_X \otimes \cH_R$, the maximal fidelity to the target operator $U$ over all operators from the ball $B(V, R)$ is
    \begin{align}
        \max_{W\in B(V,R)} F((U^\dagger W \otimes I)\ket{\psi}, \ket{\psi}) = 
        \begin{cases}
            1 & \text{if } R \geq R_{\text{sep}}\\
             \cos\left(\fsU{V}{\psi} - \beta_{\text{sep}}\right)^2 & \text{otherwise}
        \end{cases},\label{eq:max_fid_sep}
    \end{align}
    with angles $\beta_{\text{sep}}= \arccos\left(1-\frac{R^2}{4}\right)$, $\fsU{V}{\psi} = \arccos\left(\sqrt{F_{U,\psi}(V)}\right)$ and threshold 
    \begin{align}
        R_{\text{sep}} = \sqrt{4}\sqrt{1-\sqrt{F_{U,\psi}(V)}}.
    \end{align}\\
    For any maximally entangled state $\ket{\maxent} \in \cH_X \otimes \cH_R$, the achievable maximal fidelity is
    \begin{align}
        \max_{W\in B(V,R)} F((U^\dagger W \otimes I)\ket{\maxent}, \ket{\maxent}) \leq 
        \begin{cases}
            1 & \text{if } R \geq R_{\text{ent}}\\
             \cos\left(\fsU{V}{\maxent} - \beta_{\text{ent}}\right)^2 & \text{otherwise}
        \end{cases},\label{eq:max_fid_ent}
    \end{align}
    with angles $\beta_{\text{ent}}= \arccos\left(1-\frac{R^2}{2d}\right)$, $\fsU{V}{\maxent} = \arccos\left(\sqrt{F_{U,\maxent}(V)}\right)$ and threshold 
    \begin{align}
        R_{\text{ent}} = \sqrt{2d}\sqrt{1-\sqrt{F_{U,\maxent}(V)}}.
    \end{align}
\end{restatable}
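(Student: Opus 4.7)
The plan is to reduce both bounds to applications of the triangle inequality for the Bures angle, which is a metric on the projective state space. Writing $\ket{a_U} := (U\otimes I)\ket{\arb}$, $\ket{a_V} := (V\otimes I)\ket{\arb}$, and $\ket{a_W} := (W\otimes I)\ket{\arb}$, we have $\fsU{V}{\arb} = \angle(\ket{a_U},\ket{a_V})$ and $\fsU{W}{\arb} = \angle(\ket{a_U},\ket{a_W})$, so the triangle inequality gives $\fsU{W}{\arb} \geq \fsU{V}{\arb} - \angle(\ket{a_V},\ket{a_W})$. Taking cosines on the non-negative branch then yields the fidelity bound $F_{U,\arb}(W) \leq \cos^2(\fsU{V}{\arb} - \beta)$ whenever we can establish $\angle(\ket{a_V},\ket{a_W}) \leq \beta$ with $\beta$ depending only on $R$. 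The two thresholds $R_{\text{sep}}$ and $R_{\text{ent}}$ are precisely the values of $R$ at which $\beta$ catches up to $\fsU{V}{\arb}$, so that the $\cos^2$ expression collapses to $1$.

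For the maximally entangled sample the angle bound is immediate. Applying \Cref{eq:fidelity_max_ent_trace} to $V^\dagger W$ and combining with \Cref{eq:rel_dist_trace} gives $\sqrt{F(\ket{a_V},\ket{a_W})} = |\Tr(V^\dagger W)|/d = 1 - d_F'(V,W)^2/(2d) \geq 1 - R^2/(2d)$, hence $\angle(\ket{a_V},\ket{a_W}) \leq \beta_{\text{ent}}$, and the triangle-inequality step produces exactly the right-hand side of \Cref{eq:max_fid_ent}. The bound can only be an inequality because saturating the Bures triangle inequality requires $(W\otimes I)\ket{\maxent}$ to lie on the geodesic between $\ket{a_U}$ and $\ket{a_V}$, whereas the orbit $\{(W\otimes I)\ket{\maxent} : W \in \U{d}\}$ is the submanifold of maximally entangled states and generically misses that geodesic point.

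The separable case requires sharpness and is the step I expect to be the main obstacle. For $\ket{\psi} = \ket{\psi_X} \otimes \ket{\psi_R}$ we have $\angle(\ket{a_V},\ket{a_W}) = \arccos(|\bra{\psi_X} V^\dagger W \ket{\psi_X}|)$, so the task reduces to proving $|\bra{\psi_X} V^\dagger W \ket{\psi_X}| \geq 1 - R^2/4$ whenever $|\Tr(V^\dagger W)| \geq d - R^2/2$. The naive estimate $\lVert A\ket{\psi_X}\rVert \leq \lVert A \rVert_F$ only yields $\geq 1 - R^2/2$, which is loose by a factor of two and would produce the wrong threshold $R_{\text{sep}}$. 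The approach I would take is to diagonalize $M := V^\dagger W = \sum_j e^{i\phi_j} \ket{v_j}\bra{v_j}$, absorb a global phase into $W$ so that $\Tr(M) \geq 0$ (harmless, since $d_F'$ and the fidelity are phase-invariant), and rewrite the trace constraint as the two conditions $\sum_j (1 - \cos\phi_j) \leq R^2/2$ and $\sum_j \sin\phi_j = 0$. The quantity $|\bra{\psi_X} M \ket{\psi_X}| = |\sum_j q_j e^{i\phi_j}|$ with $q_j = |\braket{v_j | \psi_X}|^2$ is then the modulus of a convex combination of points on the unit circle, and its minimum over the probability simplex equals the distance from the origin to $\mathrm{conv}\{e^{i\phi_j}\}$, namely $\cos((\phi_{\max} - \phi_{\min})/2)$. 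A short constrained optimization shows that the trace budget is spent most wastefully on exactly two opposite phases $\pm \arccos(1 - R^2/4)$ with the remaining eigenvalues fixed at $1$, forcing $\phi_{\max} - \phi_{\min} \leq 2 \arccos(1 - R^2/4)$ and thereby closing the required inequality.

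Equality in \Cref{eq:max_fid_sep} is finally exhibited by an explicit construction: take $W^\ast = \tilde W V$, where $\tilde W$ is the identity outside the two-dimensional subspace spanned by $V\ket{\psi_X}$ and the component of $U\ket{\psi_X}$ orthogonal to $V\ket{\psi_X}$, and inside that subspace acts as a rotation by $\beta_{\text{sep}}$ aligning $V\ket{\psi_X}$ towards $U\ket{\psi_X}$. By construction $W^\ast \ket{\psi_X}$ lies on the Bures geodesic between $V\ket{\psi_X}$ and $U\ket{\psi_X}$, which saturates both the spectral bound and the triangle inequality, while the trace computation of \Cref{sec:metrics} yields $d_F'(V, W^\ast) = R$. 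The remaining branch $R \geq R_{\text{sep}}$ is handled by \Cref{le:separable_distance} at $f_W = 1$, which already supplies a $W \in B(V, R)$ of fidelity one.
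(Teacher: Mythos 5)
Your proposal is correct, but it is organized around a different engine than the paper's proof. The paper derives both halves of \Cref{thm:max_fid_theorem} as direct corollaries of \Cref{le:separable_distance} and \Cref{le:min_dist_entangled}: those lemmas already give the exact (resp.\ lower-bounded) distance $d_F'(V,W)$ as a function of the angle difference $\fsU{V}{\arb}-\fsU{W}{\arb}$ to the \emph{target} $U$, so the statement follows by inverting that relation in $R$ (plus, for the entangled case, a short contradiction argument); no triangle inequality is ever invoked. You instead split the problem into (a) a ``displacement'' bound $\angle\bigl((V\otimes I)\ket{\arb},(W\otimes I)\ket{\arb}\bigr)\le\beta$ depending only on $R$, and (b) the Fubini--Study triangle inequality. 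For the maximally entangled sample this is genuinely slicker: \Cref{eq:fidelity_max_ent_trace} and \Cref{eq:rel_dist_trace} give $\cos$ of the displacement angle exactly as $1-R^2/(2d)$, and your remark about why the bound cannot be saturated (the geodesic midpoint need not be of the form $(W\otimes I)\ket{\maxent}$) is a nice piece of insight the paper does not state. For the separable sample your eigenphase/convex-hull argument is new relative to the paper (which gets the analogous control from the Tromborg--Waldenstr{\o}m diagonal bound plus a Cauchy--Schwarz decomposition inside \Cref{le:separable_distance}); it is correct as sketched --- the symmetric two-phase configuration is indeed extremal by the convexity of $1-\cos$, and the chord-distance formula applies because $R<2$ keeps $\phi_{\max}-\phi_{\min}<\pi$ --- but note that the inequality you are re-deriving is literally \Cref{le:separable_distance} specialized to target operator $V$ (so $f_V=1$), and you could have cited it instead of redoing the optimization. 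Your achievability construction (a planar rotation by $\beta_{\text{sep}}$ in $\mathrm{span}\{U\ket{\psi},V\ket{\psi}\}$) coincides with the paper's operator $T$ from the proof of \Cref{le:separable_distance}, modulo the phase alignment $e^{i\theta}$ that a full writeup would need to make explicit. In short: same skeleton and same extremal construction, but the two inequality steps are proved by a different and somewhat more geometric route, trading the paper's trace-decomposition machinery for the metric structure of $\mathbb{C}\mathrm{P}^{d-1}$.
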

\noindent The proof of this lemma is given in \Cref{app:loss_improvement}.

This result specifies the best possible solution in a neighborhood around $V$ for the separable state, and it defines an upper bound for its equivalent for the maximally entangled case.
For both training samples, the maximal fidelity and, as a consequence, the minimal loss are governed by the radius $R$ of the ball and the fidelity at the current hypothesis $V$. 
Above the critical radii $R_{\text{sep}}$ and $R_{\text{ent}}$ respectively, the maximal fidelity is one, meaning an optimum for the supervised learning problem is contained in $B(V,R)$.
Below this threshold, the maximal fidelity depends on the difference of the Bures angle $\fsU{V}{\psi}$ at $V$ (or $\fsU{V}{\maxent}$ in the entangled case), and the angle $\beta_{\text{sep}}$ (or $\beta_{\text{ent}}$).
The latter angles quantify the largest possible deviation of Bures angles that are obtainable in $B(V,R)$.
They depend on the radius $R$ and on the used training sample and take values $\beta_{\text{sep}} \in \left[0,\arccos\left(\sqrt{F_{U,\psi}(V)}\right) = \fsU{V}{\psi}\right]$ for $0 \leq R \leq R_{\text{sep}}$ and $\beta_{\text{ent}} \in \left[0,\fsU{V}{\maxent}\right]$  for $0 \leq R \leq R_{\text{ent}}$.
In the entangled case, $\beta_{\text{ent}}$ additionally depends on the dimension of the state space.
As $d$ increases, $\beta_{\text{ent}}$ approaches zero.
Since $\beta_{\text{sep}}$ is independent of $d$, this suggests that the optimizer is able to obtain a larger improvement of the loss function value in a constrained neighborhood around $V$ when a separable training sample is used.

We evaluate this improvement in loss value by examining the largest decrease in loss function value in the ball $B(V,R)$ for $R \leq R_{\text{sep}}$ or $R \leq R_{\text{ent}}$, depending on the training sample.
The derivation of the bounds for the improvement is presented in \Cref{app:loss_improvement} and summarized in the following.
For an arbitrary training sample $\ket{\arb}$, we define the improvement $\imp_\alpha$ as the difference
\begin{align}\label{eq:improvement_def}
    \imp_{\arb}(U,V,R) := L_{U,\arb}(V) - \min_{W \in B(V,R)} L_{U,\arb}(W).
\end{align}

By using \Cref{thm:max_fid_theorem} and applying the angle sum identities for the resulting expression (see \Cref{eq:improvement_sin_sep} and \Cref{eq:improvement_sin_ent} in \Cref{app:loss_improvement}), the improvement of the loss function is
\begin{align}
    \imp_{x}(U,V,R) \leq \sin(2\gamma - \beta)\sin(\beta),\label{eq:improvement_reformulation}
\end{align}
for the states $\ket{x} \in \{\ket{\psi}, \ket{\maxent}\}$ with $\beta \in \{\beta_{\text{sep}}, \beta_{\text{ent}}\}$ and $\gamma \in \{\fsU{V}{\psi}, \fsU{V}{\maxent}\}$ as defined in \Cref{thm:max_fid_theorem}.
Furthermore, for the separable state $\ket{x} = \ket{\psi}$, this bound is an equality.
As $\sin(\beta_{\text{ent}}) \leq R/\sqrt{d}$, and since the dimension for the state space grows exponentially with the number of qubits, \Cref{eq:improvement_reformulation} implies that the obtainable improvement in $B(V,R)$ is exponentially small for maximally entangled states. 
We summarize this observation in the following result by considering the neighborhoods $B(V,R)$ around solutions $V$ with a fixed loss function value~$L$.

\begin{restatable}{theorem}{repsamestartvaluetheorem}\label{thm:same_start_value_theorem}
    Let $U \in \U{d}$ with $d=2^n$ be the target operator for the supervised learning problem and let $V_\psi$, $V_\maxent \in \mathcal{U}(d)$ be intermediate optimization solutions with $L_{U,\psi}(V_\psi) = L_{U,\maxent}(V_\maxent) = L$.
    For any radius $R \in \left(0, \sqrt{4(1-\sqrt{1-L})}\right]$,
    \begin{align}
        \frac{\imp_{\maxent}(U,V_\maxent,R)}{\imp_{\psi}(U,V_\psi,R)} \in \mathcal{O}\left( \frac{1}{2^{n/2}} \right).
    \end{align}
\end{restatable}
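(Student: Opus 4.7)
The plan is to reduce the ratio $\imp_{\maxent}/\imp_{\psi}$ to an explicit trigonometric expression using \Cref{eq:improvement_reformulation} and then to bound the resulting factors separately using the definitions of $\beta_{\text{sep}}$ and $\beta_{\text{ent}}$ from \Cref{thm:max_fid_theorem}. Since $L_{U,\psi}(V_\psi) = L_{U,\maxent}(V_\maxent) = L$, the two Bures angles $\fsU{V_\psi}{\psi}$ and $\fsU{V_\maxent}{\maxent}$ both equal a single $\gamma = \arcsin(\sqrt{L})$. Applying the equality case of \Cref{eq:improvement_reformulation} (for the separable sample) in the denominator and the inequality (for the maximally entangled sample) in the numerator therefore yields
\begin{align}
    \frac{\imp_{\maxent}(U,V_\maxent,R)}{\imp_{\psi}(U,V_\psi,R)} \leq \frac{\sin(2\gamma - \beta_{\text{ent}})\sin(\beta_{\text{ent}})}{\sin(2\gamma - \beta_{\text{sep}})\sin(\beta_{\text{sep}})}.
\end{align}

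The next step is to control the $\sin(\beta)$ factors. From $\cos(\beta_{\text{ent}}) = 1 - R^2/(2d)$ one computes $\sin^2(\beta_{\text{ent}}) = R^2/d - R^4/(4d^2) \leq R^2/d$, hence $\sin(\beta_{\text{ent}}) \leq R/\sqrt{d}$. Similarly, $\cos(\beta_{\text{sep}}) = 1 - R^2/4$ gives $\sin^2(\beta_{\text{sep}}) = (R^2/4)(2 - R^2/4)$; since the assumption on $R$ forces $R \leq R_{\text{sep}} \leq 2$, the factor $2 - R^2/4$ is at least $1$, so $\sin(\beta_{\text{sep}}) \geq R/2$. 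Consequently the ratio of these two sines is at most $2/\sqrt{d} = 2 \cdot 2^{-n/2}$.

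For the remaining factor $\sin(2\gamma - \beta_{\text{ent}})/\sin(2\gamma - \beta_{\text{sep}})$, the numerator is at most $1$. For the denominator, the range statement just after \Cref{thm:max_fid_theorem} gives $\beta_{\text{sep}} \in [0, \gamma]$, so $2\gamma - \beta_{\text{sep}} \in [\gamma, 2\gamma] \subseteq (0, \pi)$ for $L \in (0,1)$, making $\sin(2\gamma - \beta_{\text{sep}})$ a strictly positive quantity that depends only on $L$ and $R$, not on $n$. Multiplying the two factor-wise bounds therefore gives the claimed $\mathcal{O}(2^{-n/2})$ scaling.

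The main obstacle is the direction of the inequalities: \Cref{eq:improvement_reformulation} is only an upper bound for $\imp_{\maxent}$, while in the denominator one needs a matching lower bound for $\imp_{\psi}$; the fact that \Cref{eq:improvement_reformulation} is stated as an \emph{equality} in the separable case is therefore essential to the argument. A minor technicality is the edge-case analysis: the case $L \to 0$ is trivial because $\imp_{\psi} \to 0$ and the statement is vacuous, and the case $L = 1$ (where $2\gamma = \pi$) is still covered because $\sin(\pi - \beta_{\text{sep}}) = \sin(\beta_{\text{sep}}) \geq R/2$, so the same lower bound for the denominator applies.
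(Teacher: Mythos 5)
Your reduction to
\begin{align}
    \frac{\imp_{\maxent}(U,V_\maxent,R)}{\imp_{\psi}(U,V_\psi,R)} \leq \frac{\sin(2\gamma - \beta_{\text{ent}})}{\sin(2\gamma - \beta_{\text{sep}})}\cdot\frac{\sin(\beta_{\text{ent}})}{\sin(\beta_{\text{sep}})}
\end{align}
and the bounds $\sin(\beta_{\text{ent}}) \leq R/\sqrt{d}$, $\sin(\beta_{\text{sep}}) \geq R/2$ are exactly the paper's first steps, and your observation that the equality case of \Cref{eq:improvement_reformulation} for the separable sample is what makes the denominator usable is correct and important. The divergence, and the gap, is in how you treat the remaining factor $\sin(2\gamma-\beta_{\text{ent}})/\sin(2\gamma-\beta_{\text{sep}})$. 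You bound the numerator by $1$ and the denominator by ``a strictly positive quantity that depends only on $L$ and $R$, not on $n$.'' That yields a bound of the form $\frac{2}{\sqrt{d}\,\sin(2\gamma-\beta_{\text{sep}})}$, whose implied constant blows up as $L \to 0$: for small $\gamma$ and $R$ near $R_{\text{sep}}$ one has $\sin(2\gamma-\beta_{\text{sep}}) \approx \sin(\gamma) \approx \sqrt{L}$, so your constant behaves like $1/\sqrt{L}$. Since $V_\psi$ and $V_\maxent$ are intermediate points of an optimization on an $n$-qubit problem, $L$ naturally varies with $n$, and a $\mathcal{O}(2^{-n/2})$ claim whose constant degrades as $L$ shrinks does not establish the theorem in the regime where it matters (e.g., $L \sim 1/\mathrm{poly}(n)$ would destroy the exponential decay in your bound). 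Your edge-case remark compounds this: the statement is not ``vacuous'' as $L \to 0$ (only at $L=0$ exactly, where the radius interval is empty); small positive $L$ is precisely the dangerous case for your argument.

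The fix, which is what the paper does, is to bound the \emph{ratio} $\sin(2\gamma-\beta_{\text{ent}})/\sin(2\gamma-\beta_{\text{sep}})$ by an absolute constant rather than bounding numerator and denominator separately against $1$ and a fixed positive number. Using $\beta_{\text{ent}} \leq \beta_{\text{sep}} \leq \gamma$ one has $2\gamma - \beta_{\text{ent}} \leq 2\gamma$ and $2\gamma - \beta_{\text{sep}} \geq \gamma$, both in $[0,\pi]$; applying the parabolic upper bound $\sin(x) \leq \frac{4x}{\pi}(1-\frac{x}{\pi})$ to the numerator and the piecewise-linear Jordan lower bound to the denominator (splitting on whether $2\gamma-\beta_{\text{sep}}$ lies below or above $\pi/2$) gives a ratio of at most $4$, uniformly in $L$ and $R$, and hence the clean bounds $8/\sqrt{d}$ and $4/\sqrt{d}$ in the two cases. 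The key point your argument misses is that when the denominator $\sin(2\gamma-\beta_{\text{sep}})$ is small, the numerator $\sin(2\gamma-\beta_{\text{ent}})$ is comparably small (both arguments live in $[\gamma,2\gamma]$), so discarding the numerator's $\gamma$-dependence by bounding it with $1$ forfeits exactly the cancellation that makes the constant uniform.
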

\noindent The proof of this theorem is given in \Cref{app:loss_improvement}.

Thus, if an optimization algorithm exhaustively examines constrained neighborhoods of radius $R$ around unitary operators that have the same loss $L$ with respect to the used training sample, it will obtain exponentially larger improvement in the separable case. 
We constrain the radius in the result above to $R \leq \sqrt{4(1-\sqrt{1-L})}$ as this is the radius required to obtain a solution of zero loss for the separable case according to \Cref{thm:max_fid_theorem}.
Thus, any neighborhood of larger radius will always allow for the maximally possible improvement for the separable training sample.
Of course, the assumption that the loss function values match for both training samples restricts the applicability of this result. 
However, the exponentially large denominator in the bound $\sin(\beta_{\text{ent}}) \leq R/\sqrt{d}$ suggests that the improvement for the maximally entangled training sample is exponentially small as long as this factor is not compensated for by the second factor in \Cref{eq:improvement_reformulation}.

\section{Evaluation of PQCs}\label{sec:exp}

The analytical results in \Cref{sec:analytical} show two effects of maximal entanglement in the training samples, for optimization in $\PU{d}$ with the Frobenius norm distance as metric:
\begin{itemize}
    \item The distance to a global optimum is 
    equal or larger when a maximally entangled state is used for training (\Cref{thm:min_distance_theorem}).
    \item The improvement of the loss in a neighborhood of fixed radius is smaller when a maximally entangled state is used (\Cref{thm:same_start_value_theorem}).
\end{itemize}
According to \Cref{eq:qnfl_theorem}, training with entangled training samples is beneficial for reducing the risk after training. 
Especially, a maximally entangled state such as $\ket{\Phi}$ implies a vanishing lower bound for the expected risk of the trained PQC. 
However, the observations above suggest that this improvement might come at the expense of a diminished training performance.

It is, however, not immediately clear whether these results translate to the case of training with specific PQCs.
On the one hand, PQCs are usually only able to explore a limited subset of the unitary group. 
Since our theoretical results rely on the existence of certain low-cost operators for the separable training sample (\Cref{le:separable_distance}), the applicability of these results is not immediately given.
On the other hand, the results rely on the closeness of certain operators with respect to the used metric $d_F'$ (\Cref{le:separable_distance}).
Whether this implies a closeness of operators on the loss landscape for PQCs is not clear.

Therefore, we experimentally evaluate the analytical results by performing constrained optimization with PQCs as the supervised learning model, using classical simulation of the quantum circuits.
The setup used for these experiments is presented in \Cref{sec:exp_setup}, and the evaluation of the experiment results is specified in \Cref{sec:exp_evaluation}.
The implementations and raw result data for the experiments are available online~\cite{Mandl2025datarepo}.

\subsection{Experiment Setup}\label{sec:exp_setup}
\begin{figure}
    \centering
    \includegraphics[scale=1]{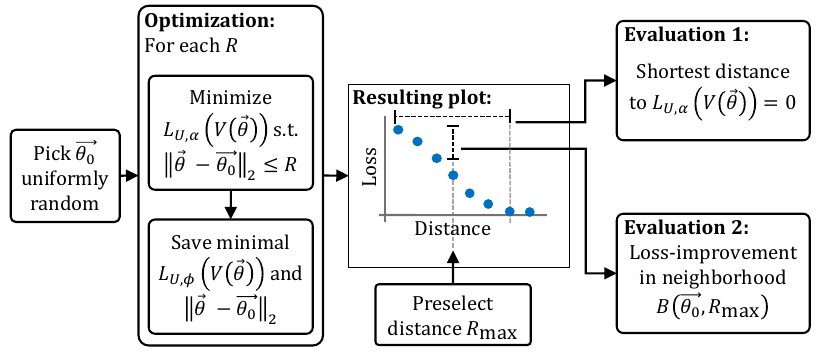}
    \caption{The experiment setup for evaluating the loss landscape for a training sample $\ket{\arb}$ around a randomly selected starting point $\vec{\theta_0}$. 
    For the selected PQC $V(\vec{\theta})$, we optimize the loss function $L_{U,\arb}(V(\vec{\theta}))$ while restricting the set of feasible parameter assignments $\vec{\theta}$ by a maximal 2-norm distance $R$.
    By iteratively increasing $R$ and saving the minimal loss and the associated distance for each run, we obtain a plot of minimal-loss points in constrained 2-norm balls around $\vec{\theta_0}$ (see also \Cref{fig:exp_output_example}). 
    From this plot, the minimal distance to an optimum (Evaluation 1) and the improvement of the loss function (Evaluation 2) up to a preselected distance $R_{\text{max}}$ (see \Cref{sec:exp_evaluation}) is inferred.}
    \label{fig:exp_setup}
\end{figure}
The general setup for our experiments uses a classical optimizer to minimize the supervised learning loss function in a constrained region of the loss landscape.
The setup and the evaluation of the results are visualized in \Cref{fig:exp_setup}.
For a given PQC $V : \mathbb{R}^p \to \mathcal{U}(d)$, we start by randomly sampling the starting point of the optimization $\vec{\theta_0} \in \mathbb{R}^p$.
We then optimize to find the minimum of the loss function when the set of feasible solutions is restricted to a ball $B(\vec{\theta_0}, R) = \Set{\vec{\theta} \; |\;\lVert \vec{\theta} - \vec{\theta_0} \rVert_2 \leq R }$ in the PQC's parameter space $\mathbb{R}^p$.
This procedure is repeated for increasingly large $R$ to obtain the distribution of minimal losses indicated in the schematic plot in \Cref{fig:exp_setup}.
We use this distribution to evaluate the shortest distance to any global minimum in the loss landscape (Evaluation 1) and the improvement within a fixed-distance neighborhood (Evaluation 2). 
The detailed descriptions of these evaluations, as well as the evaluated distances $R$, are presented in \Cref{sec:exp_evaluation}.

Our analytical results are based on the fact that for separable states, there are operators that obtain low training loss, while not fully matching the target operator.
Since for low expressivity it might not be given that a PQC can express these operators exactly, our analytical results might depend on the expressivity of the ansatz that is experimentally evaluated. 
Therefore, for our experiments, we aim to evaluate ansatzes of both high and low expressivity according to the metric described in \Cref{sec:pqcs}.

The expressivity of a broad selection of $19$ quantum circuits was evaluated in~\cite{Sim2019}. 
We therefore use a selection of the circuits employed therein for our experiments. 
In particular, we use two circuits that, in addition to parameterized rotation gates on each qubit, contain two-qubit gates applied to neighboring qubits to introduce entanglement (Circuit $4$ and Circuit $9$ in~\cite{Sim2019}).
Throughout this work, we will refer to these circuits after the two-qubit gates they use as \circfour and \circnine*.
We use these circuits as they cover a broad range of expressivity values: for a low number of circuit layers, they exhibit low expressivity, while for larger numbers of layers, the expressivity increases steadily~\cite{Sim2019}.
In contrast, we additionally use two PQC architectures that were shown in~\cite{Sim2019} to quickly saturate in expressivity as the number of layers increases.
On the one hand, we use Circuit $1$, which solely uses single-qubit gates, and we refer to it as \circone*.
On the other hand, we use Circuit $15$, which is comprised of a circular entanglement layer of $CX$ gates. 
We refer to this circuit as \circfifteen*.
For completeness, we present the circuit diagrams of each of the used PQCs in \Cref{app:expressivity}. 

For a finer-grained control over the expressivity of the employed PQCs, we perform all experiments for $l \in \{1,4,8,12,16\}$ ansatz layers and $n=5$ qubits.
This approach also leads to a varying dimension of the parameter space $\mathbb{R}^p$, ranging from $p = 5$ for \circnine with $l=1$ layer to $p = 176$ for \circfour with $l=16$ layers.

For our experiments, we aim to ensure that the possible inability of a PQC to express the target operator does not influence the results.
For this reason, we ensure that the PQC is always expressive enough for the problem that is to be solved, i.e., that zero loss after training is possible. 
To achieve this, we select the target operator by randomly sampling a parameter-assignment $\vec{\theta}_{\text{target}}$ and set $U=V(\vec{\theta}_{\text{target}})$.
To simulate the loss function calculation of the PQCs, we implement the PQCs in Pytorch~\cite{Paszke2019}.
To perform constrained optimization on the parameter space, we use the optimizer based on Sequential Least Squares Programming (SLSQP)~\cite{Kraft1988} implemented in Scipy~\cite{Virtanen2020_scipy}.
This optimizer allows for arbitrary constraints on the parameter space.
Thus, we limit the optimization process to the ball $B(\vec{\theta_0}, R)$ by specifying a maximal 2-norm distance to the starting point $\vec{\theta}_0$ as a constraint.

\subsection{Evaluation}\label{sec:exp_evaluation}
\begin{figure}
    \centering
    \includegraphics[scale=1]{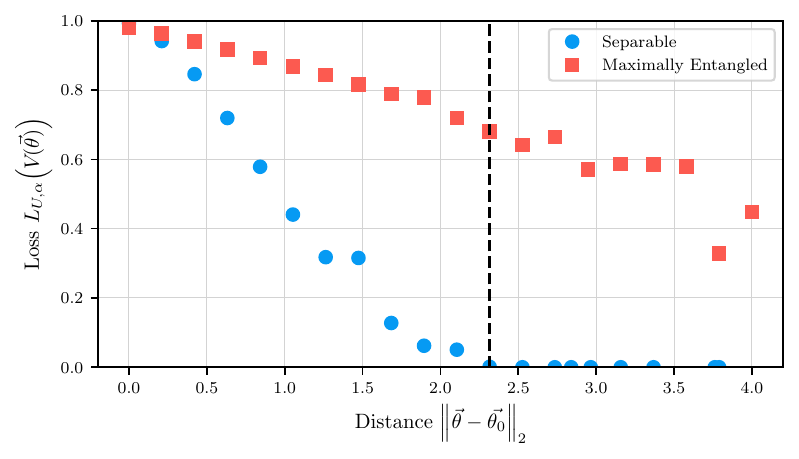}
    \caption{Minimal losses when training with a separable state (blue) and a maximally entangled state (red).
    By successively increasing the distance of feasible solutions, the optimizer finds increasingly better solutions. 
    At a distance of $\lVert \vec{\theta} - \vec{\theta_0} \rVert_2 \approx 2.3$, an optimum for the separable state is found (dashed line), while no optimum is yet found for the maximally entangled state.}
    \label{fig:exp_output_example}
\end{figure}

\Cref{fig:exp_output_example} shows the data obtained from our experiments for one example PQC $V(\vec{\theta})$.
The PQC is used to replicate an unknown operator $U$ starting from a randomly selected starting point $\vec{\theta_0}$.
The figure shows the decrease of the loss $L_{U,\arb}(V(\vec{\theta}))$ as the set of admissible solutions $B(\vec{\theta}, R)$ is increased. 
Thus, this plot shows a similar behavior to \Cref{fig:analytical_summary} for the analytical results in \Cref{sec:analytical}: the training loss decreases faster when a separable training sample (blue markers) is used. 
Furthermore, at a distance of $\lVert \vec{\theta} - \vec{\theta_0} \rVert_2 \approx 2.3$ (dashed line), a global minimum with zero loss is found for the separable training sample, while no such minimum is found yet for the maximally entangled training sample (red markers). 
This reinforces the analytical observation (\Cref{thm:min_distance_theorem}) that the distance to a minimum is generally smaller for separable training samples.
By calculating the difference between loss function values at distance zero and some distance $R$, the best possible improvement as defined in \Cref{eq:improvement_def} can be inferred.
For example, in \Cref{fig:exp_output_example} for $R = 1$, we find that the improvement for the separable state is considerably larger than for the maximally entangled state with improvements $\imp_\psi(U,V(\vec{\theta_0}), R=1) \approx 0.55$ and $\imp_\maxent(U,V(\vec{\theta_0}), R=1) \approx 0.1$.

In this section, we evaluate whether these observations hold for supervised learning of unitary operators in general.
For this task, we compare the obtained losses for the constrained optimization task across the previously described selection of PQCs in \Cref{sec:max_ent_experiment}.
This also allows us to study the influence of PQC properties, such as expressivity, on the improvement of the loss function. 
The exact values for the distance to a minimum (\Cref{thm:min_distance_theorem}) and the improvement of the loss function in a given radius $R$ (\Cref{eq:improvement_def}) depend on the starting loss $L_{U,\arb}(V(\vec{\theta_0}))$ and the used PQC itself. 
Furthermore, the approximated value for the improvement depends on the radius $R$ used for evaluation.
To obtain comparable evaluation results across runs, we therefore compare the following values:
\begin{itemize}
    \item The distance $R_{\text{max}}$ to a global minimum if such an optimum is found across the range of evaluated distances for all training samples.
    \item The improvement of the loss function at the distance where a global minimum is found for the separable training sample. Formally, for each training sample $\ket{\arb}$, we approximate the improvement (see \Cref{eq:improvement_def})
    \begin{align}
        \imp_{\arb}(U, V(\vec{\theta_0}), R_{\text{max}}) = L_{U,\arb}(V(\vec{\theta_0})) - \min_{\vec{\theta} \in B(\vec{\theta}, R_{\text{max}})} L_{U,\arb}(V(\vec{\theta})),\label{eq:improvement_experimental}
    \end{align}
    where $R_{\text{max}}$ is the smallest distance to a global minimum for the separable training sample in the current run.
\end{itemize}

In some cases, for example, for the maximally entangled training sample in \Cref{fig:exp_output_example} at distance $\lVert \vec{\theta} - \vec{\theta_0} \rVert_2 \approx4$, we observe an increase in the minimal obtained loss, although the evaluation region increased in radius.
These cases occur since we execute each run of the optimizer as an independent process and do not use previously obtained results as a fallback.
To mitigate the impact of these outliers on our results, we perform $24$ repetitions for each run of the experiment in \Cref{fig:exp_setup} and evaluate the overall distribution of the results in the following sections.

Lastly, as the analytical results in \Cref{sec:analytical} do not indicate how the training process is affected if NME states are used, we explore the possibility of using NME states for training in~\Cref{sec:nme_experiment}.

\subsubsection{Maximally Entangled States}\label{sec:max_ent_experiment}

\paragraph{Evaluation 1: Distance to Minimum}
\begin{figure}
    \centering
    \includegraphics[scale=1]{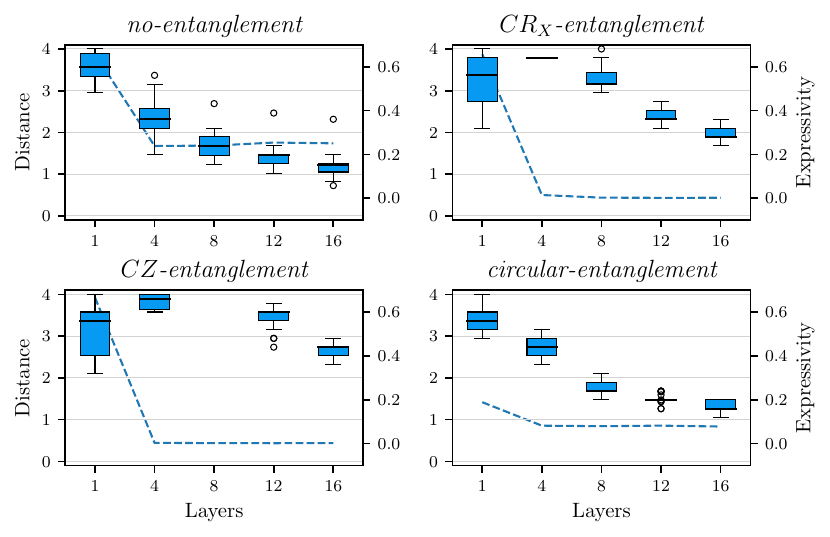}
    \caption{Distance to the closest minimum after optimizing in $B(\vec{\theta_0}, R) \leq 4$ for a separable training sample~$\ket{0}$.
    Cases where no minimum was found are excluded from the plot (e.g., \circnine*, $l=8$).
    The boxes highlight the median and extend from the first to the third quartile. The whiskers extend to 1.5 times the interquartile range.
    The secondary axis (dashed line) shows the expressivity (\Cref{sec:pqcs}) of the used PQCs, with lower values indicating high expressivity.}
    \label{fig:dist_to_zero}
\end{figure}

\Cref{fig:dist_to_zero} shows the 2-norm distance $\left\lVert \vec{\theta_0} - \vec{\theta}_{\text{opt}}\right\rVert_2$ from the starting point of the optimization to a global minimum when a separable state is used for training.
For the evaluation of the distance, we assume that a solution $V(\vec{\theta_{\text{opt}}})$ is a global minimum if $L_{U,\psi}(V(\vec{\theta}_{\text{opt}})) \leq 10^{-3}$.
For this setting, a global minimum for the separable state is consistently found in our maximal evaluation radius of $\lVert \vec{\theta} - \vec{\theta_0} \rVert_2 \leq 4$, regardless of the dimension of the parameter space of the PQC. 
As the number of layers of the used PQCs increases, so does the dimension $p$ of their parameter space $\mathbb{R}^p$.
In our experiments, we randomly sample both the starting point $\vec{\theta_0}$ as well as the target point $\vec{\theta}_{\text{target}}$ from this parameter space.
This would imply that the expected 2-norm distance between these points increases with the number of layers.
However, when comparing the results for $l=1$ and $l=16$ in \Cref{fig:dist_to_zero}, the plot suggests the opposite effect.
For the deep PQCs in our experiments, increasing the number of layers of the PQC decreases the distance required to find a minimum when a separable state is used for training.

In addition to the increase in parameter-space dimension, increasing the number of layers should also increase the expressivity of an ansatz.
To distinguish which of these effects causes the decrease in the distance to a minimum observed in our experiments, we further evaluate the expressivity.
\Cref{fig:dist_to_zero} shows the expressivity of each PQC on the secondary axis as a dashed line. 
The expressivity was evaluated according to the approach described in~\cite{Sim2019}, which is further elaborated in \Cref{app:expressivity}.
For $l\geq 4$ layers, the expressivity for \circfour and \circnine is maximal as indicated by a value close to zero. 
However, even though the expressivity does not significantly change by increasing the number of layers beyond $l\geq 4$, the distance to the minimum still decreases. 
This suggests that the decrease in the measured distances to the loss minimum is connected to the increased dimension $p$ of the parameter space.

\paragraph{Evaluation 2: Improvement in Local Neighborhood}
In the previous section, we evaluated the distance $R_{\text{max}}$  to a solution of zero loss for the separable training sample.
We proceed by comparing the improvement as defined in \Cref{eq:improvement_experimental} of the loss function in local neighborhoods around the starting point $\vec{\theta_0}$. 
Since the improvement for the separable state saturates at $\imp_{\psi}(U, V(\vec{\theta_0}), R_{\text{max}}) = L_{U,\psi}(V(\vec{\theta_0}))$, we define a local neighborhood to be the ball $B(\vec{\theta_0}, R_{\text{max}})$.
Thus, we evaluate the largest possible improvement for the separable state and the maximally entangled state in a neighborhood large enough to contain a minimum for the former.

\begin{figure}
    \centering
    \includegraphics[scale=1]{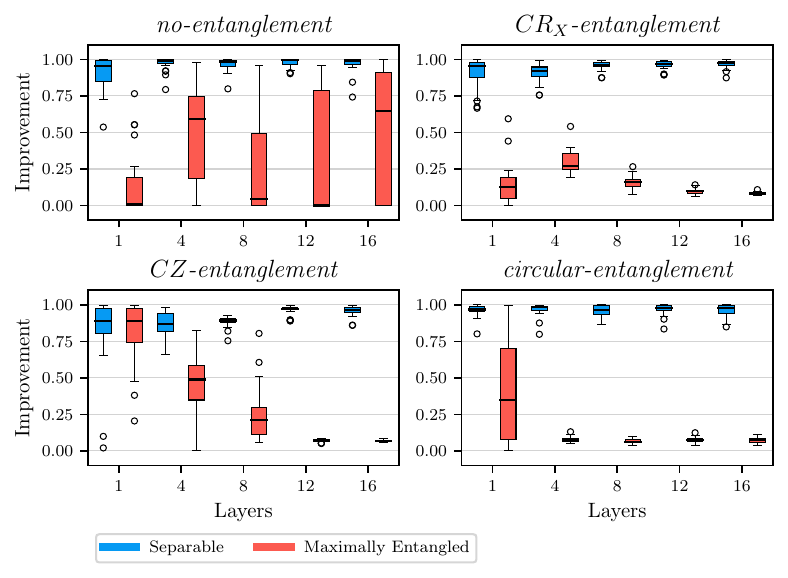}
    \caption{Improvement of the loss function in the ball $B(\vec{\theta_0}, R_{\text{max}})$ in parameter space. 
    The blue boxes show the improvement when a separable training sample $\ket{\psi}$ is used, and the red boxes show the improvement for a maximally entangled training sample $\ket{\maxent}$.}
    \label{fig:improvement_res}
\end{figure}

Randomly sampling the starting point for the optimization can lead to a high starting loss, as seen, for example, in \Cref{fig:exp_output_example}. 
Thus, the improvement for the loss function as shown in \Cref{fig:improvement_res} is consistently close to $1$ for the separable training sample as indicated by the blue boxes. 
One notable exception is \circnine*, where the improvement for $l=4$ and $l=8$ layers is slightly lower. 
This is due to the fact that for these instances, a solution with zero loss was not always found within $R\leq 4$.
However, even in these cases, the improvement of the loss function is mostly lower when a maximally entangled training sample is used (red boxes). 
Particularly when the number of layers increases, the difference between the improvement for the separable and the maximally entangled state increases.

A noteworthy exception to this observation is \circone*. 
Herein, the median improvement is largest for $l=16$.
Furthermore, the results for this PQC exhibit comparably large variance.
This follows in part from the setup of our experiments. 
As the target operator, we use $U=V(\vec{\theta}_{\text{target}})$ to ensure that the PQC is always able to express the target operator.
The PQC \circone is comprised solely of $R_X$ and $R_Z$ rotations without any two-qubit interactions. 
Since any single-qubit gate can be decomposed into $R_Z(\theta_1)R_X(\theta_2)R_Z(\theta_3)$ up to global phase differences~\cite{Barenco1995}, \circone is universal for the set of operators without two-qubit interactions as soon as $l\geq 2$.
Therefore, although this circuit has low expressivity, it is able to express any possible target operator in our experiments by adjusting a relatively low number of parameters.
Thus, even if the ansatz allows a large number of parameters, the target operator $U$ is found by only small adjustments of the parameters, even for the maximally entangled state.

\subsubsection{Non-Maximally Entangled States}\label{sec:nme_experiment}
\begin{figure}
    \centering
    \includegraphics[scale=1]{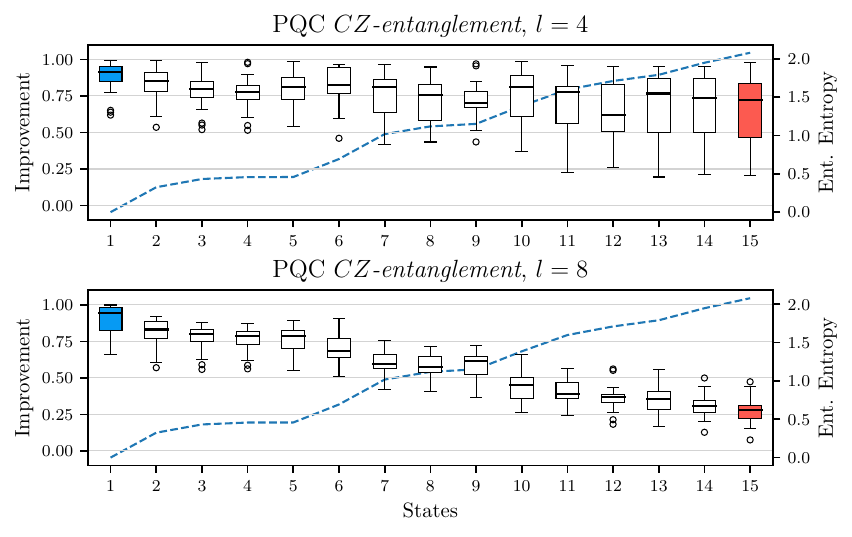}
    \caption{Improvement for NME states. Each box shows the evaluated improvement in a local neighborhood when training with an NME state. The states are ordered by their entanglement entropy, with low-entropy states on the left and high-entropy states on the right. The entanglement entropy is shown as a dashed line. The results for the separable state are highlighted in blue, and the results for the maximally entangled training sample are highlighted in red.}
    \label{fig:improvement_entanglement}
\end{figure}

The results in \Cref{sec:max_ent_experiment} show that for some PQCs, the improvement in a local neighborhood is decreased considerably if a maximally entangled state is used for training.
To further evaluate the influence of entanglement in training samples on supervised learning, we extend our experiments to include NME states.
In this section, we train \circnine with $l=4$ and $l=8$ layers using the approach described in \Cref{sec:exp_setup} for $n=3$ qubits.
As the training samples, we again use the separable state $\ket{\psi}$ and the maximally entangled state $\ket{\maxent}$.
However, we also include various states of intermediate entanglement by varying their Schmidt rank and Schmidt coefficients.

\Cref{fig:improvement_entanglement} shows the improvement obtained for these training samples. 
As the radius $R$ for the calculation of the improvement, we again use the smallest $R$ such that a minimum with zero loss is found for $\ket{\psi}$ and use the maximal value of $R=4$ if no such minimum is found.
Herein, each box corresponds to one training sample, and the results are ordered according to the entanglement entropy $E(\ket{\arb})$ (\Cref{eq:ent_entropy}) of each training sample $\ket{\arb}$, from lowest entropy on the left to highest entropy on the right.
The dashed line on the secondary axis shows the value of the entanglement entropy for each state.
Following the previous section, the separable training sample~$\ket{\psi}$ is highlighted in blue, and the maximally entangled training sample is highlighted in red.

For the larger PQC with $l=8$ layers, a dependence of the improvement on the entanglement entropy of the training sample is apparent. 
As the entanglement entropy increases from the separable state to the maximally entangled state, the improvement of the loss function in a fixed neighborhood decreases.
For $l=4$, this effect is less pronounced. 
However, the states of high entanglement entropy show a larger variance in their improvement value when compared to states of low entanglement entropy.
This indicates that it is still possible to obtain good improvement using these states in some cases, but it is increasingly unlikely.

\begin{figure}
    \centering
    \includegraphics[scale=1]{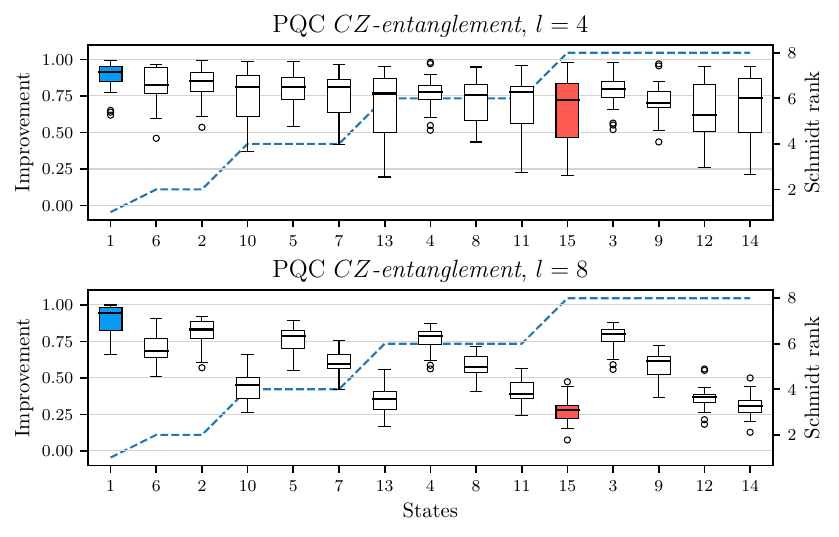}
    \caption{Improvement for NME states from \Cref{fig:improvement_entanglement} reordered based on the Schmidt rank of each training sample.
    The Schmidt rank is shown as a dashed line with values on the secondary axis.
    The state labels correspond to the labels in \Cref{fig:improvement_entanglement}.}
    \label{fig:improvement_schmidt_rank}
\end{figure}

Since the risk in supervised learning is affected by the Schmidt rank of the training samples, as opposed to the entanglement entropy, we proceed by evaluating our results based on the Schmidt rank of each sample. 
\Cref{fig:improvement_schmidt_rank} shows the improvement for NME states ordered by their Schmidt ranks, which is shown as a dashed line with values on the secondary axis.
The labels correspond to the labels of the states in \Cref{fig:improvement_entanglement} to aid comparison.
In contrast to \Cref{fig:improvement_entanglement}, we find no clear connection between the Schmidt rank and the improvement.
For example, for $l=8$, even for maximal Schmidt rank of $r=8$, there are states with improvement $\geq 0.5$.

\section{Discussion}\label{sec:discussion}

To evaluate the effect of entanglement in the training data on the supervised learning problem, we first simplified the problem setting in \Cref{sec:analytical} to training on the metric space described by the Frobenius norm distance. 
The results in this idealized setting were afterwards evaluated experimentally using actual PQCs.

In the simplified theoretical setting, we found that the distance to a global minimum in the loss landscape is exponentially larger when entanglement is used in the training samples.
This result is partly a consequence of existing results on the generalization error in supervised learning \Cref{sec:supervised_learning}.
The generalization error is expected to be minimal when a maximally entangled training sample is used and a zero loss solution is found during training.
Thus, the loss function for the maximally entangled sample can only be zero for solutions of zero generalization error. 
However, for the single separable training sample, this is not required: even if a solution of zero loss was reached during training, it is not guaranteed that this solution has minimal generalization error~\cite{Poland2020, Sharma2020}.
Consequently, there are more operators with zero loss for the separable state than there are for the maximally entangled state. 
\Cref{le:separable_distance} further shows that these operators are very common, as they are within a constant distance of any point on the loss landscape.

Unfortunately, although maximal entanglement in the training samples ensures that all optimization results have low generalization error, it introduces a drawback in the training complexity. 
The possible improvement of the loss function value (\Cref{sec:improvement_analytical}) in a constrained neighborhood scales inversely with the dimension of the state space. 
Thus, when learning operators with a large number of qubits, the exponential size of the state space implies that the improvement is exponentially small.
This entails a number of possible problems.
On the one hand, an exponentially small difference in loss function values implies that the gradient is exponentially small. 
Since classical optimizers often use approximations of this gradient, the precision of these approximations must increase with the dimension of the state space, which in turn increases optimization complexity~\cite{Thanasilp2023}.
On the other hand, these exponentially small variations in the loss function values could be overshadowed by noise on the quantum hardware.
This makes it increasingly hard to distinguish sampling noise from the gradient in the loss landscape, hindering the convergence of the training process.

In our analytical results, supervised learning algorithms are modeled as processes that explore the entire space $\PU{d}$ uniformly in search of a solution.
In implementations, this exploration is carried out by navigating the parameter space of a PQC.
However, this setting differs from the idealized one in two important ways: (i) a PQC generally cannot represent the complete set of unitary operators in $\PU{d}$, and (ii) the exploration of the accessible subset of unitaries is not guaranteed to be uniform.
Therefore, we evaluated our analytical findings using numerical simulations of PQC training.
We find that our results on the distance to a minimum in the optimization process (\Cref{le:separable_distance}) are well reflected in the experiments.
For the separable training sample, we consistently found a minimum in constant 2-norm distance.
Furthermore, we showed that this is not necessarily the case when the maximally entangled training sample is used.
In the worst case, the improvement of the loss function was significantly smaller than that achieved with the separable state. 

\subsection{PQC Expressivity}

To assess how the improvement of the loss function value depends on the expressivity of the PQC, we varied the number of layers of each ansatz.
The expressivity was measured by comparing the distribution of state fidelities obtained by randomly sampling from the PQC with that of states uniformly sampled according to the Haar measure~\cite{Sim2019}.
Increasing the number of layers increases the expressivity of an ansatz up to a certain limit.
\Cref{thm:min_distance_theorem} and \Cref{thm:same_start_value_theorem} assume that the model can express certain unitary operators of low loss for the separable state.
Thus, higher expressivity should entail a greater discrepancy in the improvement values for the training samples.
\Cref{fig:improvement_res} shows that this was generally the case when the number of layers was increased, with the exception of the PQC \circone*.
We ensured in our experiments that each used PQC is able to represent the solution to the learning problem.
Thus, a circuit of relatively low expressivity, such as \circone*, is still able to express the target operator.
Since the set of possible operators \circone can express is very limited, the chance of finding this target operator in relatively low distance, and thus showing large improvement, is increased.
Therefore, using circuits that only model a small subset of the set of unitary operators is preferable if it can be ensured that the circuit models the solution. 
This fact highlights that knowledge about the problem structure is beneficial in supervised learning.
For example, if it is known that the target operator is separable, using a PQC, such as \circone*, that solely models separable operators is sufficient.

Increasing the number of layers in a PQC not only increases expressivity but also increases the dimension of the parameter space. 
Thus, the increased complexity of the training process could also stem from the increased dimension of the parameter space.
However, since this increase in complexity would affect both training samples in our experiments equally, we argue that this is not the case.
Even with a large number of parameters, the optimizer consistently achieved a large improvement in the loss function value for the separable training sample. 
This reinforces the observation that the entanglement in the training data is detrimental to the optimizer's performance if highly expressive circuits are used.

\subsection{NME States}\label{sec:disc_nme}
Lastly, using our simulator experiments, we are able to evaluate the effect of entanglement when NME states are used. 
Especially for highly expressive circuits, we found that increasing the amount of entanglement in the training samples decreases the improvement in a neighborhood on the loss landscape. 
However, this decrease in improvement was not directly caused by the increase in the Schmidt rank of the training samples, but by the increase in entanglement entropy. 
For example, in \Cref{fig:improvement_schmidt_rank} there are states of high Schmidt rank, with large improvement ($\approx 0.75$ for state 3 using \circnine with $l=8$), and there are states with low Schmidt rank with comparably small improvement ($\approx 0.5$ for state 10 with Schmidt rank $4$).

The generalization error in supervised learning was shown to be linked to the Schmidt rank of the training sample, rather than the entanglement entropy.
This suggests that, theoretically, it is beneficial to use states of high Schmidt rank and low entanglement entropy for training.
The high Schmidt rank ensures that the generalization error stays low, while the low entanglement entropy ensures that the improvement during optimization is high.
However, low entanglement entropy implies that the Schmidt coefficients of these states are concentrated with high coefficients for one basis state and low coefficients for the others.
Thus, we suspect that training with such states might introduce local minima in the loss function where the trained operator has low loss on the high-coefficient basis states, but high loss on all others.
However, even if this is the case, NME states or separable states could serve as a warm start~\cite{Egger_2021, Truger2023} for the optimization process. 
Instead of only training with maximally entangled states, the optimization process could first train with states of low entropy and then fine-tune the optimization result using high-entropy entangled~states.

\section{Related Work}\label{sec:rel_work}
This work evaluates the training complexity of PQCs depending on the amount of entanglement of the training data with a reference system.
However, other sources of increased training complexity have also been identified in related work~\cite{Larocca2025}.
On the one hand, the expressivity of the PQC serves as an indicator of its susceptibility to barren plateaus, e.g., it was shown that the upper bound for the gradient variance is reduced by increasing the expressivity of the ansatz~\cite{Holmes2022}.
On the other hand, the observable that is used to infer the training loss influences the loss landscape.
Cerezo~et~al.~\cite{Cerezo2021a} show that for certain classes of PQCs, loss functions described by global observables, i.e., observables that act on all qubits, lead to barren plateaus.
Thus, for this setup, local loss functions are preferable.
By evaluating the effect of the training data on the loss function, these results were extended to capture the effect of training data on the loss landscape. 

Thanasilp~et~al.~\cite{Thanasilp2023} showed that even for local observables, entanglement of the training sample negatively affects the loss landscape. 
They evaluate ansatzes composed of tensor products of unitary operators acting on subsets of the available qubits. 
The entanglement of the training input, relative to one of these unitary operators, is quantified by tracing out all qubits that are not involved in the parameterized operator.
Their results show that the variance of the loss function, with respect to the parameters used in the parameterized operator, is proportional to the distance of the traced-out operator to the maximally mixed state~\cite{Thanasilp2023}. 
Thus, since a traced out maximally entangled state is maximally mixed, highly entangled states are detrimental to gradient variance in this setup.

Leone~et~al.~\cite{Leone2024} describe a similar effect for the Hardware Efficient Ansatz by using the entanglement entropy of the input states with respect to subdivisions of the input space that are given by the ansatz structure. 
By connecting this measure of entanglement to the amount of information that can be extracted from a subset of the input qubits, they show that a high entanglement entropy leads to concentration of the loss function values.
In the notation of our results, these works evaluate entanglement of the input states relative to subdivisions of the space $\cH_X$.
These subdivisions are specified by the structure of the ansatz that is used.
In contrast, we specifically evaluate the entanglement with a reference system and aim to keep the ansatz description general by considering optimization over all possible operators in $\PU{d}$. 
However, even with this difference in problem setup, our results lead to similar observations: a high entanglement entropy in the input states is detrimental to the optimization performance.

Starting from the various sources of barren plateaus that can arise in variational quantum algorithms, Cerezo~et~al.~\cite{Cerezo2023} investigate the question of whether barren plateaus are the consequence of the advantage that is obtained by employing quantum computers for these tasks. 
Quantum computers leverage the exponentially large operator spaces (e.g., unitary operators $V(\vec{\theta}) \in \U{d}$ in our case) for computation, but variational algorithms aim to classically optimize these operators using only a polynomial number of queries to the quantum computer.
They argue, by proving this connection for various applications, that the absence of barren plateaus implies that the algorithm only makes use of a polynomially large subspace of the Hilbert space and is thus also classically simulable~\cite{Cerezo2023}.
Thus, applied to the results in this work, this suggests that training with separable states could be classically simulable.
This reinforces the warm-starting approach proposed in \Cref{sec:disc_nme}: By starting the optimization process classically using training samples of low entanglement, initial parameters for an initialization for the optimization process with maximally entangled states can be obtained.
This initialization strategy might then prove useful in preventing problems during the optimization caused by barren plateaus.

Lastly, the training process requires that the fidelity can be accurately measured to infer the training loss.
Wang~et~al.~\cite{Wang2023} therefore evaluate the effect of the required measurement process on the risk after training when entanglement with a reference system is used and the observable is defined on $\cH_X$ only. 
They find that a limited number of measurements is detrimental to the risk after training when highly entangled training samples are used and provide guidelines on when entangled training samples are beneficial, based on the number of available measurements.
These results evaluate the performance by the risk after training, instead of by the loss landscape during training, as is done in this work.
However, they infer a similar connection between high entanglement and possibly decreased model performance.

\section{Conclusion}\label{sec:conclusion}
Although the introduction of entanglement with a reference system is beneficial for low-risk approximations in supervised learning, it might introduce additional complexity in the training process. 
In our analytical results, we show that this increase in complexity is reflected in a lower variation in loss function values within constrained neighborhoods of the loss landscape.
Furthermore, the generally lower variation in loss function values also causes an increase in the distance to a global minimum of the loss function when maximally entangled training samples are used.

We defined a neighborhood as balls given by the Frobenius norm distance as the metric. 
This metric does not necessarily have to capture closeness relations in the loss landscape when PQCs are used as the supervised learning model. 
Therefore, we extended the evaluation of the complexity of the loss landscape by experiments using the simulation of supervised learning using a selection of PQCs.
These experiments generally confirmed the analytical observation.
However, they also showed that the loss landscape is affected by the expressivity of the used quantum circuits. 
Highly expressive PQCs typically exhibited a larger decrease in loss function variation. 
Therefore, our results suggest that although high expressivity is generally preferred when no knowledge about the problem is available, this might lead to problems in the optimization if highly entangled training samples are used.

The experimental evaluation also allowed us to examine intermediate levels of entanglement. 
Our experiments show that NME states, even those of maximal Schmidt rank, might lead to a more favorable structure of the loss landscape than maximally entangled states.
This presents an opportunity to circumvent the problems introduced by maximal entanglement, either by using NME states throughout the whole training process or applying them in warm-starting procedures.
Therefore, we propose to explore this avenue in future work more thoroughly. 
Especially, it is unclear whether a pretraining approach that utilizes NME states could guide the classical optimizer to local minima. 
Furthermore, we used the Frobenius norm distance as the metric for the optimization space in our analytical results, because of its close connection to the loss function in the case of maximally entangled training samples.
However, the most appropriate metric for capturing the loss landscape depends on the specific PQC used. 
As such, an important direction for future work is to extend our analysis by evaluating the loss landscape under different metrics.

Lastly, as the measure of the expressivity in our experiments, we used the deviation of state fidelities for states generated by the PQC with state fidelities obtained by uniform sampling.
In our experiments on the distance to a minimum in PQC training, we observed that although the expressivity saturated for relatively low numbers of PQC layers, the distance still improved.
Thus, in this case, the used expressivity measurement could not fully predict the improvement in distance that was observed.
Since there are other ways of measuring expressivity presented in literature, we propose investigating whether the used approach influences these results.

\section*{Acknowledgements}
This work was partially funded by the BMWK projects \textit{EniQmA} (01MQ22007B) and \textit{SeQuenC} (01MQ22009B).

\appendix

\section{Distance to Fixed-Fidelity Operators}\label{app:cost_results}

In this section, we evaluate the Frobenius norm distance $d_F'(V,W)$ from a starting point $V\in\PU{d}$ to any operator $W$ with a fixed target fidelity $F_{U,\psi}(W) = f_W$. 
Thus, we find bounds on the distance to the closest operator in the set $\Wset{\psi}(f_W)$ for the separable state $\ket{\psi}$ defined in \Cref{sec:analytical}, and its counterpart $\Wset{\maxent}(f_W)$ for the maximally entangled state $\ket{\maxent}$.
In the first case, for $\ket{\psi}$, we provide a lower bound on the distance and show that this lower bound can be reached, i.e., there exists a minimal operator that attains the lower bound for the distance (\Cref{le:separable_distance}).
The distance to this operator is independent of the dimension $d$ of the Hilbert space.
In the latter case, for $\ket{\maxent}$, we provide a similar lower bound, which differs, however, in that it contains the dimension of the Hilbert space as a factor (\Cref{le:min_dist_entangled}).
This discrepancy leads to the observation that, in the worst case, a global minimum for the maximally entangled training sample is exponentially further from $V$ than a global minimum for the separable sample.
This observation is formally shown in \Cref{thm:min_distance_theorem}, which uses the following intermediate result.

\repseparabledistance*

\begin{proof}
    We use the relationship of the Frobenius norm distance and the absolute trace $\left| \Tr(V^\dagger W) \right|$ (\Cref{eq:rel_dist_trace}) and first provide an upper bound for the absolute trace for all $W \in \Wset{\psi}(f_W)$ and then proceed to construct a specific element in $\Wset{\psi}(f_W)$ that obtains this upper bound.
    For separable states, the reference system $\cH_R$ can be ignored. 
    Thus, for the remainder of this proof, we assume that $\ket{\psi} \in \cH_X$ refers to the factorization of the state $\ket{\psi}$ when the reference system is traced out.
    For the fidelity, we thus have $f_V= F_{U,\psi}(V) = \left| \braket{\psi|U^\dagger V|\psi}\right|^2$ and $f_W= F_{U,\psi}(W) = \left| \braket{\psi|U^\dagger W|\psi}\right|^2$.

    \emph{Upper bound:}
    For the proof of the upper bound, we express $\left| \Tr(V^\dagger W) \right|$ using any orthonormal basis (ONB) $\{\ket{\psi_j}\}_{j=1}^d$ with $\ket{\psi_1} = \ket{\psi}$:
    \begin{align}
        \left| \Tr(V^\dagger W) \right|
        &= \left| \braket{\psi_1 | V^\dagger W | \psi_1} + \sum_{j=2}^d \braket{\psi_j | V^\dagger W | \psi_j }\right|\\
        &\leq \left| \braket{\psi_1 | V^\dagger W | \psi_1}  \right| + \left|\sum_{j=2}^d \braket{\psi_j | V^\dagger W | \psi_j }\right|.\label{eq:trvwtriangle}
    \end{align}
    For the unitary operator $V^\dagger W$, existing results (see Theorem~1 in~\cite{Tromborg1978}) show that for every diagonal element $\braket{\psi_k | V^\dagger W | \psi_k}$, it holds that 
    \begin{align}
        \sum_{j=1}^d \left| \braket{\psi_j | V^\dagger W | \psi_j} \right| - 2\left| \braket{\psi_k | V^\dagger W | \psi_k} \right| \leq (d-2).
    \end{align}
    We rearrange this inequality for $k=1$ to obtain 
    \begin{align}
        \sum_{j=2}^d \left| \braket{\psi_j | V^\dagger W | \psi_j} \right| \leq (d-2) + \left| \braket{\psi_1 | V^\dagger W | \psi_1} \right|,
    \end{align}
    which, using \Cref{eq:trvwtriangle}, allows us to give an upper bound for the absolute trace
    \begin{align}
        \left| \Tr(V^\dagger W) \right| \leq 2\left| \braket{\psi_1 | V^\dagger W | \psi_1} \right| + (d-2).\label{eq:upper_bound_trace_first}
    \end{align}
    We proceed by expressing the remaining inner product in the equation above as 
    \begin{align}
        \braket{\psi_1 | V^\dagger W | \psi_1} = \braket{\psi | V^\dagger W | \psi} = \braket{\psi | V^\dagger U U^\dagger W| \psi }\label{eq:expansion_VW}
    \end{align}
    and write 
    \begin{align}
        U^\dagger V\!\ket{\psi} = \braket{\psi | U^\dagger V | \psi} \ket{\psi} + \ket{\psi_{U^\dagger V}^\bot}\label{eq:uv_bot}
    \end{align}
    and 
    \begin{align}
        U^\dagger W\!\ket{\psi} = \braket{\psi | U^\dagger W | \psi} \ket{\psi} + \ket{\psi_{U^\dagger W}^\bot}.\label{eq:uw_bot}
    \end{align}
    Herein, $\ket{\psi_{U^\dagger W}^\bot}$ and $\ket{\psi_{U^\dagger V}^\bot}$ are (not necessarily normalized) vectors with $\braket{\psi|\psi_{U^\dagger V}^\bot} = \braket{\psi|\psi_{U^\dagger W}^\bot} = 0$.
    By rewriting the inner product in \Cref{eq:expansion_VW} using these decompositions and applying the triangle inequality, we have 
    \begin{align}
        \left| \braket{\psi_1 | V^\dagger W | \psi_1}\right| &= \left| \braket{\psi| V^\dagger U U^\dagger W |\psi} \right|\\
        &= \left| \left( U^\dagger V\ket{\psi}\right)^\dagger \left(U^\dagger W \ket{\psi} \right) \right|\\
        &=  \left| \left( \braket{\psi | U^\dagger V | \psi} \ket{\psi} + \ket{\psi_{U^\dagger V}^\bot} \right)^\dagger \left( \braket{ \psi | U^\dagger W | \psi } \ket{ \psi } + \ket{ \psi_{U^\dagger W}^\bot } \right)  \right|\\
        &= \left| \left( \braket{ \psi | V^\dagger U | \psi } \bra{ \psi } + \bra{ \psi_{U^\dagger V}^\bot } \right) \left( \braket{ \psi | U^\dagger W | \psi } \ket{ \psi } + \ket{ \psi_{U^\dagger W}^\bot } \right)  \right|\\
        \begin{split}
        &= \Big| \braket{\psi|U^\dagger W |\psi} \braket{\psi|V^\dagger U |\psi} + \braket{\psi|U^\dagger W |\psi} \braket{\psi_{U^\dagger V}^\bot|\psi}\\
        &\qquad + \braket{\psi|V^\dagger U |\psi} \braket{\psi|\psi_{U^\dagger W}^\bot} + \braket{\psi_{U^\dagger V}^\bot|\psi_{U^\dagger W}^\bot}\Big|
        \end{split}\\
        &= \left| \braket{\psi|U^\dagger W |\psi} \braket{\psi|V^\dagger U |\psi} + \braket{\psi_{U^\dagger V}^\bot|\psi_{U^\dagger W}^\bot} \right|\\
        &\leq \left| \braket{\psi|U^\dagger W |\psi} \right| \left| \braket{\psi|U^\dagger V |\psi} \right| + \left| \braket{\psi_{U^\dagger V}^\bot|\psi_{U^\dagger W}^\bot} \right|\\
        &= \sqrt{f_V f_W} + \left| \braket{\psi_{U^\dagger V}^\bot|\psi_{U^\dagger W}^\bot} \right|.\label{eq:upperbound_first_step}
    \end{align}
    Using the Cauchy-Schwarz inequality, the rightmost summand is 
    \begin{align}
        \left| \braket{\psi_{U^\dagger V}^\bot|\psi_{U^\dagger W}^\bot} \right| \leq 
        \left\lVert \ket{\psi_{U^\dagger V}^\bot}\right\rVert \left\lVert\ket{\psi_{U^\dagger W}^\bot} \right\rVert\label{eq:csbotineq}.
    \end{align}
    For the right-hand side, we imply from the normalization of the state in \Cref{eq:uv_bot},
    \begin{align}
        1 &= \left\lVert \braket{\psi|U^\dagger V | \psi} \ket{\psi} + \ket{\psi_{U^\dagger V}^\bot}\right\rVert^2\\
        &= \left( \braket{\psi|U^\dagger V | \psi} \ket{\psi} + \ket{\psi_{U^\dagger V}^\bot} \right)^\dagger \left( \braket{\psi|U^\dagger V | \psi} \ket{\psi} + \ket{\psi_{U^\dagger V}^\bot} \right)\\
        &=  \left( \braket{\psi|V^\dagger U | \psi} \bra{\psi} + \bra{\psi_{U^\dagger V}^\bot} \right) \left( \braket{\psi|U^\dagger V | \psi} \ket{\psi} + \ket{\psi_{U^\dagger V}^\bot} \right)\\
        &= \braket{\psi|V^\dagger U | \psi}\braket{\psi|U^\dagger V | \psi} \braket{\psi|\psi} + \braket{\psi_{U^\dagger V}^\bot|\psi_{U^\dagger V}^\bot}\\
        &= \left|\braket{\psi|U^\dagger V|\psi}\right|^2 + \left\lVert\ket{\psi_{U^\dagger V}^\bot}\right\rVert^2.
    \end{align}
    Therefore,
    \begin{align}
        \left\lVert\ket{\psi_{U^\dagger V}^\bot}\right\rVert^2 
        &= 1 - \left|\braket{\psi|U^\dagger V|\psi}\right|^2\\
        &= 1 - f_V,\label{eq:1mfvform}
    \end{align}
    and using a similar argument,
    \begin{align}
        \left\lVert\ket{\psi_{U^\dagger W}^\bot} \right\rVert^2 &= 1-f_W,\label{eq:1mfwform}
    \end{align}
    follows from the norm of $U^\dagger W\!\ket{\psi}$ in \Cref{eq:uw_bot}.
    Therefore, by applying \Cref{eq:1mfvform} and \Cref{eq:1mfwform} to \Cref{eq:csbotineq} and \Cref{eq:upperbound_first_step}, we have
    \begin{align}
        \left| \braket{\psi_1 | V^\dagger W | \psi_1}\right| \leq \sqrt{f_V f_W} + \sqrt{(1 - f_V)(1 - f_W)}.
    \end{align}
    In conjunction with the upper bound in \Cref{eq:upper_bound_trace_first}, we can therefore give an upper bound for the absolute trace as 
    \begin{align}
        \left| \Tr(V^\dagger W) \right| \leq 2\left(\sqrt{f_V f_W} + \sqrt{(1 - f_V)(1 - f_W)}\right) + (d-2).\label{eq:trace_upper_bound}
    \end{align}

    \emph{Maximum:} 
    To show that the upper bound in \Cref{eq:trace_upper_bound} constitutes the maximum for $W \in \Wset{\psi}(f_W)$, we construct a specific unitary operator $\widetilde{W} = TV \in \Wset{\psi}(f_W)$ that obtains the upper bound.
    For this task, we define an ONB $\mathcal{B} = \left\{ \ket{b_1}, \dots, \ket{b_d} \right\}$  and use this basis to write $T$ in matrix form.
    Using this definition, we then show membership of $\widetilde{W}$ in $\Wset{\psi}(f_W)$ according to the definition in \Cref{eq:def_W_set} by inferring its unitarity and by calculating $F(U^\dagger \widetilde{W}\!\ket{\psi}, \ket{\psi})$.

    According to the decomposition in \Cref{eq:uv_bot}, there is a vector $\ket{\psi_{U^\dagger V}^\bot}$ that is orthogonal to $\ket{\psi}$.
    Let $\ket{\gamma}$ either be the normalized variant of this vector, i.e.~$\ket{\gamma} := \ket{\psi_{U^\dagger V}^\bot}/\lVert \ket{\psi_{U^\dagger V}^\bot} \rVert$, or let $\ket{\gamma}$ be any state with $\braket{\gamma|\psi}=0$ in case $\lVert \ket{\psi_{U^\dagger V}^\bot} \rVert = 0$.
    Using the Steinitz exchange theorem (see~\cite{Liesen2015}, 9.17) and the Gram-Schmidt procedure~\citep{Liesen2015}, we extend the set $\{U\!\ket{\psi}, U\!\ket{\gamma}\}$ by pairwise orthonormal vectors $\ket{b_3}, \dots, \ket{b_d}$ to obtain the ONB ${\mathcal{B} = \{\ket{b_1} := U\!\ket{\psi}, \ket{b_2} := U\!\ket{\gamma}, \ket{b_3}, \dots, \ket{b_d}\}}$.
    We define $T$ in the basis $\mathcal{B}$ as 
    \begin{align}
        T := \left(\begin{array}{@{}cc|c@{}}
            x & e^{i\theta} y & \\
            -e^{-i\theta}y & x & \smash{\raisebox{.5\normalbaselineskip}{$0$}}\\
            \hline
            &&\\[-0.5\normalbaselineskip]
            \multicolumn{2}{c|}{$0$} & I_{d-2}
            \\[-0.5\normalbaselineskip]
            &&
        \end{array}\right)\label{eq:def_T},
    \end{align}
    where $I_{d-2}$ is the identity on $\mathbb{C}^{d-2}$, $\theta := \arg\left(\braket{\psi|U^\dagger V|\psi}\right)$ is the phase angle of the inner product (with $\theta = 0$ if $\braket{\psi|U^\dagger V|\psi} = 0$), and 
    \begin{align}
        x:=\cos(\fsU{V}{\psi} - \fsU{W}{\psi}),\\
        y:=\sin(\fsU{V}{\psi} - \fsU{W}{\psi}),
    \end{align}
    The angles 
    \begin{align}
        \fsU{V}{\psi} = \arccos\left(\left| \braket{\psi|U^\dagger V|\psi} \right|\right) = \arccos\left(\sqrt{f_V}\right)\label{eq:repeated_def_gammaV}
    \end{align}
    and 
    \begin{align}
        \fsU{W}{\psi} = \arccos\left(\left| \braket{\psi|U^\dagger W|\psi} \right|\right) = \arccos\left(\sqrt{f_W}\right),
    \end{align}
    follow the definition of the Bures angle in \Cref{sec:metrics}.
    The matrix $T$ is unitary since 
    \begin{align}
        T^\dagger T &= \left(\begin{array}{@{}cc|c@{}}
            x & -e^{i\theta} y & \\
            e^{-i\theta}y & x &\smash{\raisebox{.5\normalbaselineskip}{$0$}}\\
            \hline
            &&\\[-0.5\normalbaselineskip]
            \multicolumn{2}{c|}{$0$} & I_{d-2}
            \\[-0.5\normalbaselineskip]
            &&
        \end{array}\right)
        \left(\begin{array}{@{}cc|c@{}}
            x & e^{i\theta} y & \\
            -e^{-i\theta}y & x & \smash{\raisebox{.5\normalbaselineskip}{$0$}}\\
            \hline
            &&\\[-0.5\normalbaselineskip]
            \multicolumn{2}{c|}{$0$} & I_{d-2}
            \\[-0.5\normalbaselineskip]
            &&
        \end{array}\right)\\
        &=
        \left(\begin{array}{@{}cc|c@{}}
            x^2 + y^2 & e^{i\theta} (xy - xy) & \\
            e^{-i\theta}(xy - xy) & x^2 + y^2 & \smash{\raisebox{.5\normalbaselineskip}{$0$}}\\
            \hline
            &&\\[-0.5\normalbaselineskip]
            \multicolumn{2}{c|}{$0$} & I_{d-2}
            \\[-0.5\normalbaselineskip]
            &&
        \end{array}\right)\\
        &=
        \left(\begin{array}{@{}cc|c@{}}
            1 & 0 & \\
            0 & 1 & \smash{\raisebox{.5\normalbaselineskip}{$0$}}\\
            \hline
            &&\\[-0.5\normalbaselineskip]
            \multicolumn{2}{c|}{$0$} & I_{d-2}
            \\[-0.5\normalbaselineskip]
            &&
        \end{array}\right),
    \end{align}
    where 
    \begin{align}
        x^2 + y^2 = \cos(\fsU{V}{\psi} - \fsU{W}{\psi})^2 + \sin(\fsU{V}{\psi} - \fsU{W}{\psi})^2 = 1.
    \end{align}
    Therefore, since $T$ and $V$ are unitary operators, $\widetilde{W} = TV \in \U{d}$.
    
   To calculate the fidelity $F(U^\dagger \widetilde{W}\! \ket{\psi}, \ket{\psi}) = F(U^\dagger TV\! \ket{\psi}, \ket{\psi})$, we first express the image of $\ket{\psi}$ after the application of $V$ in the basis $\mathcal{B}$.
    By rewriting \Cref{eq:uv_bot} using the definition of $\ket{\gamma}$ as
    \begin{align}
        U^\dagger V\!\ket{\psi} = \braket{\psi | U^\dagger V | \psi} \ket{\psi} + \left\lVert \ket{\psi_{U^\dagger V}^\bot} \right\rVert \ket{\gamma},
    \end{align}
    and multiplying from the left by $U$, it holds that
    \begin{align}
        V\!\ket{\psi} = \braket{\psi | U^\dagger V | \psi} U\!\ket{\psi} + \left\lVert \ket{\psi_{U^\dagger V}^\bot} \right\rVert U\!\ket{\gamma}.\label{eq:V_psi_rewrite}
    \end{align}
    Therefore, 
    \begin{align}
       F(U^\dagger TV\! \ket{\psi}, \ket{\psi}) &= \left|\braket{\psi|U^\dagger TV | \psi} \right|^2\\
       &= \left| \bra{\psi}U^\dagger \left( \braket{\psi | U^\dagger V | \psi} TU\!\ket{\psi} + \left\lVert \ket{\psi_{U^\dagger V}^\bot} \right\rVert TU\!\ket{\gamma}\right) \right|^2\\
       &= \left| \braket{\psi|U^\dagger V |\psi}\braket{\psi|U^\dagger T U|\psi} + \left\lVert \ket{\psi_{U^\dagger V}^\bot} \right\rVert \braket{\psi|U^\dagger T U |\gamma}\right|^2\\
       &= \left| \braket{\psi|U^\dagger V |\psi}\braket{b_1| T |b_1} + \left\lVert \ket{\psi_{U^\dagger V}^\bot} \right\rVert \braket{b_1| T |b_2}\right|^2\\
       &= \left| \braket{\psi|U^\dagger V |\psi}x + \left\lVert \ket{\psi_{U^\dagger V}^\bot} \right\rVert e^{i\theta} y\right|^2.
    \end{align}
    Herein, the last equality uses the definition of $T$ in the basis $\mathcal{B}$ in \Cref{eq:def_T} to substitute the matrix elements $\braket{b_1 | T | b_1} = T_{11}$ and $\braket{b_1 | T | b_2} = T_{12}$. Since $\theta$ is defined as the argument of $\braket{\psi|U^\dagger V|\psi}$, $\braket{\psi|U^\dagger V|\psi} = |\!\braket{\psi|U^\dagger V|\psi}\!| e^{i\theta}$.
    Thus, the arguments of both summands are equal, and the absolute value reduces to the sum of the magnitudes: 
    \begin{align}
        F(U^\dagger TV\! \ket{\psi}, \ket{\psi}) &= \left| \braket{\psi|U^\dagger V |\psi}x + \left\lVert \ket{\psi_{U^\dagger V}^\bot} \right\rVert e^{i\theta} y\right|^2\\
        &= \left| \left|\braket{\psi|U^\dagger V |\psi}\right| e^{i\theta} x + \left\lVert \ket{\psi_{U^\dagger V}^\bot} \right\rVert e^{i\theta} y\right|^2\\
        &= \left(\left|\braket{\psi|U^\dagger V |\psi}\right| x + \left\lVert \ket{\psi_{U^\dagger V}^\bot} \right\rVert y \right)^2\\
        &= \left(\cos(\fsU{V}{\psi})\;x + \left\lVert \ket{\psi_{U^\dagger V}^\bot} \right\rVert y \right)^2.
    \end{align}
    Herein, the last equality uses the definition of $\fsU{V}{\psi}$ in \Cref{eq:repeated_def_gammaV}. 
    From \Cref{eq:1mfvform}, we further have $\lVert \ket{\psi_{U^\dagger V}^\bot} \rVert = \sqrt{1-f_V} = \sqrt{1-\cos(\fsU{V}{\psi})^2} = \sin(\fsU{V}{\psi})$, therefore
    \begin{align}
        F(U^\dagger TV\! \ket{\psi}, \ket{\psi}) &= \left(\cos(\fsU{V}{\psi})x + \sin(\fsU{V}{\psi})y \right)^2\\
        &= \Big(\cos(\fsU{V}{\psi})\cos(\fsU{V}{\psi} - \fsU{W}{\psi}) \\
        &\qquad+ \sin(\fsU{V}{\psi})\sin(\fsU{V}{\psi} - \fsU{W}{\psi}) \Big)^2\\
        &= \cos(\fsU{V}{\psi} - \fsU{V}{\psi} + \fsU{W}{\psi})^2\label{eq:add_form_applied}\\
        &= \cos(\fsU{W}{\psi})^2\\
        &= f_W,
    \end{align}
    where we apply the trigonometric addition formula~\cite{Gelfand2001} in \Cref{eq:add_form_applied}.
    Therefore, since $\widetilde{W}$ is unitary and $F(U^\dagger \widetilde{W}\! \ket{\psi}, \ket{\psi}) = F(U^\dagger TV\! \ket{\psi}, \ket{\psi}) = f_W$, it follows that $\widetilde{W} \in \Wset{\psi}(f_W)$ according to the definition in \Cref{eq:def_W_set}.
    Lastly, to show that $\widetilde{W}$ achieves the upper bound from \Cref{eq:trace_upper_bound} by calculating $\left|\Tr(V^\dagger \widetilde{W})\right|$, we use the basis-invariance of the trace:
    \begin{align}
        \left| \Tr(V^\dagger \widetilde{W}) \right| &= \left| \Tr(V^\dagger TV) \right|\\
        &= \left| \Tr(T) \right|\\
        &= \left| 2x + \Tr(I_{d-2})\right|\\
        &= \left| 2\left( \cos(\fsU{V}{\psi} - \fsU{W}{\psi})\right) + (d-2)\right|\label{eq:trace_as_cos}\\
        &= \left|2\left( \cos(\fsU{V}{\psi})\cos(\fsU{W}{\psi}) + \sin(\fsU{V}{\psi})\sin(\fsU{W}{\psi})\right) + (d-2)\right|\label{eq:addition_theorem_expansion}\\
        &= 2\left(\sqrt{f_V f_W} + \sqrt{(1-f_V)(1-f_W)}\right) + (d-2),
    \end{align}
    Herein, \Cref{eq:addition_theorem_expansion} expands the result using the trigonometric addition formula~\cite{Gelfand2001} and the absolute value is omitted in the last equality since for $f_V, f_W \in [0,1]$, the expression is always nonnegative.
    Thus, it is possible to construct a unitary operator $\widetilde{W} \in \Wset{\psi}(f_W)$, that obtains the upper bound for $|\Tr(V^\dagger W)|$ in \Cref{eq:trace_upper_bound}.
    In other words,
    \begin{align}
        \max_{W \in \Wset{\psi}(f_W)} \left|\Tr(V^\dagger W)\right| = 2 \left( \sqrt{f_V f_W} + \sqrt{(1-f_V)(1-f_W)} \right) + (d-2).
    \end{align}
    By the relationship between the absolute trace and the Frobenius norm distance (\Cref{eq:rel_dist_trace}), the minimal distance is
    \begin{align}
        \min_{W \in \Wset{\psi}(f_W)} d_F'(V,W) &= \min_{W \in \Wset{\psi}(f_W)} \sqrt{2d \left( 1 - \frac{1}{d} \left|\Tr(V^\dagger W)\right| \right)}.\label{eq:rel_applied}
    \end{align}
    Since the square root is monotonically increasing on $\mathbb{R}_{\geq 0}$, it is minimal when its argument is minimal.
    For constant dimension $d$, this is the case in \Cref{eq:rel_applied} when the absolute trace $\left| \Tr(V^\dagger W)\right|$ is maximal.
    Therefore,
    \begin{align}
        \min_{W \in \Wset{\psi}(f_W)} d_F'(V,W) &= \sqrt{2d \left( 1 - \frac{1}{d} \max_{W \in \Wset{\psi}(f_W)} \left|\Tr(V^\dagger W)\right| \right)}\\
        &= \sqrt{2d \left( 1 - \frac{1}{d} \left(2 \left( \sqrt{f_V f_W} + \sqrt{(1-f_V)(1-f_W)} \right) + (d-2) \right)\right)}\\
        &= \sqrt{2 \left( d - 2 \left( \sqrt{f_V f_W} + \sqrt{(1-f_V)(1-f_W)}\right)  - d + 2\right)}\\
        &= \sqrt{4 \left( 1 - \left( \sqrt{f_V f_W} + \sqrt{(1-f_V)(1-f_W)}\right) \right)}\\
        &= \sqrt{4 \left( 1 -  \sqrt{f_V f_W} - \sqrt{(1-f_V)(1-f_W)}\right) }\label{eq:le1:before_last_reform}.
    \end{align}
    Lastly, using $\sqrt{f_V f_W} + \sqrt{(1-f_V)(1-f_W)} = \cos(\fsU{V}{\psi} - \fsU{W}{\psi})$ (compare \Cref{eq:trace_as_cos}), the maximum can be given in terms of the Bures angles $\fsU{V}{\psi}$ and $\fsU{W}{\psi}$ as
    \begin{align}
        \min_{W \in \Wset{\psi}(f_W)} d_F'(V,W) &=\sqrt{4 \left( 1 -  \cos(\fsU{V}{\psi} - \fsU{W}{\psi})\right) }.
    \end{align}
    
\end{proof}

The proof of \Cref{le:separable_distance} makes use of a decomposition of the states $U^\dagger V\!\ket{\psi}$ and $U^\dagger W\!\ket{\psi}$ into two orthogonal vectors, the input $\ket{\psi}$ and the orthogonal counterparts $\ket{\psi_{U^\dagger V}^\bot}$ and $\ket{\psi_{U^\dagger W}^\bot}$ (see \Cref{eq:uv_bot} and \Cref{eq:uw_bot}).
These orthogonal counterparts contain all errors introduced by the unitary operator when compared to the identity. 
By expressing their norms in terms of the fidelity, the upper bound for the distance $d_F'(V,W)$ is obtained.
This approach can be extended by decomposing the operators $U^\dagger V$ and $U^\dagger W$ themselves using the inner product $\braket{A,B}_F = \Tr(A^\dagger B)$, instead of decomposing the states after their application.
Using this decomposition, a similar bound is obtained for the distance between $V$ and $W$ given the target fidelity $f_W$ for the maximally entangled state.
\begin{lemma}\label{le:min_dist_entangled}
    Let $\ket{\maxent} \in \cH_X \otimes \cH_R$ be a maximally entangled state with $\dim(\cH_X) = d$, let $U \in \U{d}$ be the target operator, and $V \in \U{d}$ the current hypothesis operator such that $F_{U,\maxent}(V) = F((U^\dagger V \otimes I)\!\ket{\maxent}, \ket{\maxent}) = f_V$. 
    For a fixed target fidelity $f_W \in [0,1]$,
    \begin{align}
        \min_{W \in \Wset{\maxent}(f_W)} d_F'(V, W) &\geq \sqrt{2d\left(1 - \sqrt{f_V f_W} - \sqrt{(1-f_V)(1-f_W)}\right)}\\
        &= \sqrt{2d (1-\cos(\fsU{V}{\maxent} - \fsU{W}{\maxent}))}\label{eq:angle_reformulation},
    \end{align}
where $\fsU{V}{\maxent} = \arccos(\sqrt{f_V})$ and $\fsU{W}{\maxent} = \arccos(\sqrt{f_W})$.
\end{lemma}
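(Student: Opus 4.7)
The plan is to reduce the problem to an upper bound on $|\Tr(V^\dagger W)|$, since by \Cref{eq:rel_dist_trace} the distance $d_F'(V,W)$ is a monotonically decreasing function of that quantity. Unlike the separable case handled in \Cref{le:separable_distance}, the decomposition should take place in the space of operators equipped with the Hilbert-Schmidt inner product $\braket{A,B}_F := \Tr(A^\dagger B)$, as anticipated in the remark immediately preceding the lemma. The connection to the maximally entangled training sample is furnished by \Cref{eq:fidelity_max_ent_trace}, which rewrites $|\Tr(U^\dagger V)| = d\sqrt{f_V}$ and $|\Tr(U^\dagger W)| = d\sqrt{f_W}$, so that constraining the fidelities $f_V$ and $f_W$ is equivalent to constraining the magnitude of the Hilbert-Schmidt inner product of $V$ and $W$ with $U$.

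Concretely, I would write $V = aU + V^\perp$ with $a = \Tr(U^\dagger V)/d$ and $\braket{U,V^\perp}_F = 0$, and analogously $W = bU + W^\perp$ with $b = \Tr(U^\dagger W)/d$. Since $\lVert U \rVert_F^2 = \lVert V \rVert_F^2 = \lVert W \rVert_F^2 = d$ by unitarity, the Pythagorean identity in the Frobenius norm yields $\lVert V^\perp \rVert_F^2 = d(1-f_V)$ and $\lVert W^\perp \rVert_F^2 = d(1-f_W)$. Expanding $\Tr(V^\dagger W) = \braket{V,W}_F$ using the two decompositions, the cross terms vanish by orthogonality, leaving
\begin{align*}
\Tr(V^\dagger W) = \overline{a}\,b\,d + \braket{V^\perp, W^\perp}_F.
\end{align*}
Applying the triangle and Cauchy-Schwarz inequalities, and inserting $|a| = \sqrt{f_V}$, $|b| = \sqrt{f_W}$ together with the previously computed norms, gives
\begin{align*}
|\Tr(V^\dagger W)| &\leq |a|\,|b|\,d + \lVert V^\perp \rVert_F\,\lVert W^\perp \rVert_F\\
&\leq d\sqrt{f_V f_W} + d\sqrt{(1-f_V)(1-f_W)}.
\end{align*}
Identifying the right-hand side with $d\cos(\fsU{V}{\maxent} - \fsU{W}{\maxent})$ via the trigonometric addition formula, exactly as in the separable proof, and substituting into \Cref{eq:rel_dist_trace} produces the claimed lower bound on $d_F'(V,W)$. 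Because the estimate is uniform over $W \in \Wset{\maxent}(f_W)$, it also bounds the minimum from below.

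The main subtlety, and the reason the statement is an inequality rather than an equality, is that Cauchy-Schwarz on $\braket{V^\perp, W^\perp}_F$ cannot in general be saturated simultaneously with the constraint that $bU + W^\perp$ be unitary: freely aligning $W^\perp$ with $V^\perp$ in the space of $d\times d$ matrices typically breaks unitarity of $W$. This is why no constructive counterpart to the matrix $T$ from the proof of \Cref{le:separable_distance} is available. The same structural feature is what ultimately gives the $\sqrt{d}$ prefactor inside the bound, as opposed to the constant $2$ obtained in the separable case (where the Tromborg-Waldenström argument absorbed $d-2$ diagonal contributions trivially), and hence produces the exponential gap between the two distances that underlies \Cref{thm:min_distance_theorem}.
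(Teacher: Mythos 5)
Your proof is correct and follows essentially the same route as the paper's: the paper decomposes $U^\dagger V$ and $U^\dagger W$ along the identity in the Hilbert--Schmidt inner product, which is unitarily equivalent to your decomposition of $V$ and $W$ along $U$, and the subsequent orthogonality, Cauchy--Schwarz, and norm computations coincide exactly. Your closing remark on why the bound is only an inequality (no unitary analogue of the matrix $T$) matches the paper's implicit reasoning as well.
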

\begin{proof}
    Similar to the proof of \Cref{le:separable_distance}, we provide an upper bound for $\left| \Tr(V^\dagger W)\right|$ and use this upper bound to give a lower bound for $d_F'(V,W)$.
    We start of by decomposing $U^\dagger V$ and $U^\dagger W$ into the identity and the operators $I_{U^\dagger V}^\bot$ and $I_{U^\dagger W}^\bot$ orthogonal to $I$ as
    \begin{align}
        U^\dagger V = \frac{\braket{I, U^\dagger V}_F}{d} I + I_{U^\dagger V}^\bot,\label{eq:mat_uv_bot}\\
        U^\dagger W = \frac{\braket{I, U^\dagger W}_F}{d} I + I_{U^\dagger W}^\bot\label{eq:mat_uw_bot}.
    \end{align}
    Herein, $\braket{I, I_{U^\dagger V}^\bot}_F = \Tr(I_{U^\dagger V}^\bot) = 0$ and 
    $\braket{I, I_{U^\dagger W}^\bot}_F = \Tr(I_{U^\dagger W}^\bot) = 0$.
    By rewriting $\left| \Tr(V^\dagger W)\right|$ in terms of this decomposition and applying the triangle inequality,
    \begin{align}
        \left| \Tr(V^\dagger W)\right| &= \left| \Tr(V^\dagger U U^\dagger W)\right| \\
        &= \left| \braket{U^\dagger V, U^\dagger W}_F\right| \\
        &= \left| \Braket{\frac{\braket{I, U^\dagger V}_F}{d} I + I_{U^\dagger V}^\bot, \frac{\braket{I, U^\dagger W}_F}{d} I + I_{U^\dagger W}^\bot }_F \right|\\
        \begin{split}
        &= \Biggl| \frac{\overline{\braket{I, U^\dagger V}}_F \braket{I, U^\dagger W}_F}{d^2} \braket{I,I}_F + \frac{\overline{\braket{I, U^\dagger V}}_F}{d}\braket{I, I_{U^\dagger W}}_F \\
        &\qquad + \frac{\braket{I, U^\dagger W}_F}{d}\braket{I_{U^\dagger V}^\bot, I}_F + \braket{I_{U^\dagger V}^\bot, I_{U^\dagger W}^\bot}_F\Biggr|
        \end{split}\\
        &= \left| \frac{\braket{U^\dagger V, I}_F \braket{I, U^\dagger W}_F}{d^2} d + \braket{I_{U^\dagger V}^\bot, I_{U^\dagger W}^\bot}_F\right|\\
        &\leq \frac{d}{d^2} \left| \braket{U^\dagger V, I}_F\right| \left| \braket{I, U^\dagger W}_F\right| + \left| \braket{I_{U^\dagger V}^\bot, I_{U^\dagger W}^\bot}_F \right|.
    \end{align}
    For the maximally entangled state, the fidelity $F_{U,\maxent}(V)$ is proportional to the absolute inner product (\Cref{eq:fidelity_max_ent_trace}):
    \begin{align}\label{eq:fidelity_trace_app}
        \frac{1}{d^2}\left|\braket{I, U^\dagger V}_F\right|^2 = \frac{1}{d^2}\left|\Tr(U^\dagger V)\right|^2 = F_{U,\maxent}(V).
    \end{align}
    Therefore, 
    \begin{align}\label{eq:proof_bound_trace}
        \left| \Tr(V^\dagger W)\right| \leq d \sqrt{f_V} \sqrt{f_W} + \left| \braket{I_{U^\dagger V}^\bot, I_{U^\dagger W}^\bot}_F \right|,
    \end{align}
    where we again used the shorthand $f_V = F_{U,\maxent}(V)$ and $f_W = F_{U,\maxent}(W)$ for the fidelity at $V$ and the target fidelity at $W$.
    Applying the Cauchy-Schwarz inequality for the rightmost summand shows 
    \begin{align}
        \left| \braket{I_{U^\dagger V}^\bot, I_{U^\dagger W}^\bot}_F \right| &\leq \left\lVert I_{U^\dagger V}^\bot \right\rVert_F \left\lVert I_{U^\dagger W}^\bot \right\rVert_F. \label{eq:I_CS_inequality}
    \end{align}
    Both norms in the expression above are derived from the decompositions in \Cref{eq:mat_uv_bot} and \Cref{eq:mat_uw_bot}, together with \Cref{eq:fidelity_trace_app} to write the norm in terms of the fidelity:
    \begin{align}
        d &= \braket{U^\dagger V, U^\dagger V}_F\\
        &= \Braket{\frac{\braket{I, U^\dagger V}_F}{d} I + I_{U^\dagger V}^\bot, \frac{\braket{I, U^\dagger V}_F}{d} I + I_{U^\dagger V}^\bot}_F\\
        \begin{split}
        &=  \frac{\overline{\braket{I, U^\dagger V}}_F \braket{I, U^\dagger V}_F}{d^2} \braket{I,I}_F + \frac{\overline{\braket{I, U^\dagger V}}_F}{d}\braket{I, I_{U^\dagger V}}_F \\
        &\qquad + \frac{\braket{I, U^\dagger V}_F}{d}\braket{I_{U^\dagger V}^\bot, I}_F + \braket{I_{U^\dagger V}^\bot, I_{U^\dagger V}^\bot}_F
        \end{split}\\
        &= \frac{d}{d^2} \left| \braket{I, U^\dagger V}_F\right|^2 + \braket{I_{U^\dagger V}^\bot,I_{U^\dagger V}^\bot}_F\\
        &= d\,f_V + \left\lVert I_{U^\dagger V}^\bot \right\rVert^2_F,
    \end{align}
    and as a result,
    \begin{align}
        \left\lVert I_{U^\dagger V}^\bot \right\rVert_F = \sqrt{d(1-f_V)}\label{eq:norm_IV}.
    \end{align}
    Similarly, 
    \begin{align}
        \left\lVert I_{U^\dagger W}^\bot \right\rVert_F = \sqrt{d(1-f_W)}\label{eq:norm_IW}.
    \end{align}
    Therefore, by using \Cref{eq:norm_IV} and \Cref{eq:norm_IW} in \Cref{eq:I_CS_inequality} in conjunction with \Cref{eq:proof_bound_trace}, an upper bound is derived
    \begin{align}
        \left| \Tr(V^\dagger W)\right| \leq d \sqrt{f_V f_W} + d \sqrt{(1-f_V)(1-f_W)}.
    \end{align}
    Lastly, since the upper bound for the absolute trace holds for all $W$ with $F_{U,\maxent}(W) = f_W$, we use this upper bound in \Cref{eq:rel_dist_trace} to obtain a lower bound for the sought-after distance:
    \begin{align}
        \min_{W \in \Wset{\maxent}(f_W)} d_F'(V, W) &= \min_{W \in \Wset{\maxent}(f_W)} \sqrt{2d \left(1 - \frac{1}{d} \left| \Tr(V^\dagger W) \right| \right)}\\
        &\sqrt{2d \left( 1 - \frac{1}{d} \max_{\Wset{\maxent}(f_W)} \left|\Tr(V^\dagger W)\right| \right)}\\
        &\geq 
        \sqrt{2d \left( 1 - \frac{d}{d} \left( \sqrt{f_V f_W} +  \sqrt{(1-f_V)(1-f_W)} \right) \right)}\\
        &= 
        \sqrt{2d \left( 1 -  \sqrt{f_V f_W} -  \sqrt{(1-f_V)(1-f_W)} \right)}.
    \end{align}
    Similar to \Cref{le:separable_distance}, this lower bound can be reformulated in terms of the angles $\fsU{V}{\maxent}$ and $\fsU{W}{\maxent}$ as 
    \begin{align}
        \min_{W \in \Wset{\maxent}(f_W)} d_F'(V, W) &\geq \sqrt{2d\left(1 - \cos(\fsU{V}{\maxent} - \fsU{W}{\maxent})\right)},
    \end{align}
    which concludes this proof.
\end{proof}

Using the target fidelity $f_W = 1$ for the globally optimal solution to the supervised learning problem with zero loss results in the distances 
\begin{align}
    \min_{\Wset{\psi}(1)} d_F'(V,W) = \sqrt{4\left(1-\sqrt{F_{U,\psi}(V)}\right)}
\end{align}
for the separable state and 
\begin{align}
    \min_{\Wset{\maxent}(1)} d_F'(V,W) \geq \sqrt{2d\left(1-\sqrt{F_{U,\maxent}(V)}\right)}\label{eq:distance_opt_ent_app}
\end{align}
for the maximally entangled state.
Since \Cref{eq:distance_opt_ent_app} depends on the exponentially large dimension $d$ of the state space, these bounds show that, at least when $F_{U,\psi}(V) = F_{U,\maxent}(V)$, the distance to a minimum is exponentially larger when $\ket{\maxent}$ is used.
In \Cref{thm:min_distance_theorem}, we extend this observation to show that this is always the case except when $V$ is already exponentially close to the target operator $U$.

\repmindistancetheorem*
\begin{proof}
    The first statement follows from the fact that any optimal $W_\maxent$, when $\ket{\maxent}$ is used, is equal to the target operator $U$ up to global phase differences. 
    Thus, since $W_{\maxent}$ matches $U$, the fidelity $F_{U,\psi}(W_{\maxent})$ must be maximal for any state $\ket{\psi}$.
    As a result, $W_{\maxent} \in \Wset{\psi}(1)$, which implies that the minimal distance to any $W \in \Wset{\psi}$ can not exceed the distance to $W_{\maxent}$.
    
    For statement (ii), any $W_\maxent$ has $d_F'(V, W_{\maxent}) = d_F'(U, V)$ since $W_\maxent$ matches $U$ apart from the global phase factor.
    According to \Cref{eq:frob_relationship_fidelity},
    \begin{align}
        d_F'(V, W_\maxent) = \sqrt{2d \left( 1- \cos(\fsU{V}{\maxent})\right)}.
    \end{align}
    Applying \Cref{le:separable_distance} and \Cref{le:min_dist_entangled} with $f_W = 1$, shows 
    \begin{align}
        \frac{d_F'(V, W_{\psi})}{d_F'(V, W_{\maxent})} &= \sqrt{\frac{4}{2d}} \sqrt{\frac{ 1- \cos(\fsU{V}{\psi})}{1- \cos(\fsU{V}{\maxent})}}\label{eq:fraction_equality}\\
        &\leq \sqrt{\frac{4}{2d}} \sqrt{\frac{ 1}{1- \cos(\fsU{V}{\maxent})}},
    \end{align} 
    where we use $1-\cos(\fsU{V}{\psi}) \leq 1$ for the Bures angle $\fsU{V}{\psi} \in [0, \frac{\pi}{2}]$.
    We further apply the lower bound $1-\cos(\fsU{V}{\maxent}) \geq \frac{4}{\pi^2} \fsU{V}{\maxent}^2$ for $\fsU{V}{\maxent} \in [0, \frac{\pi}{2}]$ (see Theorem~1 in~\cite{Bagul2022}), to obtain
    \begin{align}
        \frac{d_F'(V, W_{\psi})}{d_F'(V, W_{\maxent})} \leq \sqrt{\frac{4\pi^2}{2d}} \frac{1}{2\fsU{V}{\maxent}} = \sqrt{\frac{\pi^2}{2d}} \frac{1}{\fsU{V}{\maxent}} \label{eq:ratio_upper_bound}.
    \end{align}
    
    The assumption of statement (ii), $L_{U,\maxent}(V) = \sin(\fsU{V}{\maxent})^2 \in \lanom\left(\frac{1}{\mathrm{poly}(n)}\right)$, implies 
    \begin{align}
        \frac{1}{\sin(\fsU{V}{\maxent})^2} \in \mathcal{O}\left( \mathrm{poly}(n)\right).
    \end{align}
    Furthermore for $\fsU{V}{\maxent} \in [0, \pi/2]$, the lower bound $\fsU{V}{\maxent} \geq \sin(\fsU{V}{\maxent})$ applies~(see \cite{Qi2006}, 1.1).
    Therefore,
    \begin{align}
        \frac{1}{\fsU{V}{\maxent}^2} \leq \frac{1}{\sin(\fsU{V}{\maxent})^2} \in \mathcal{O}\left(\mathrm{poly}(n)\right)
    \end{align}
    and as a result
    \begin{align}
        \frac{1}{\fsU{V}{\maxent}} \in \mathcal{O}\left(\mathrm{poly}(n)\right).
    \end{align}
    Thus, since the reciprocal of the Bures angle $\fsU{V}{\maxent}$ is bounded above by a polynomial in $n$, the ratio in \Cref{eq:ratio_upper_bound} is dominated by the exponentially large dimension $d=2^n$ in its denominator for large $n$.
    Therefore, 
    \begin{align}
        \frac{d_F'(V, W_{\psi})}{d_F'(V, W_{\maxent})} \in \mathcal{O}\left( \frac{1}{\sqrt{d}}\right) = \mathcal{O}\left( \frac{1}{2^{n/2}}\right).
    \end{align}

    For statement (iii), we start at \Cref{eq:fraction_equality} and rewrite
    \begin{align}
        && \frac{d_F'(V, W_{\psi})}{d_F'(V, W_{\maxent})} &= \sqrt{\frac{4}{2d}} \sqrt{\frac{ 1- \cos(\fsU{V}{\psi})}{1- \cos(\fsU{V}{\maxent})}} \\
        \Leftrightarrow && \sqrt{1-\cos(\fsU{V}{\maxent})} &= \frac{d_F'(V, W_{\maxent})}{d_F'(V, W_{\psi})} \sqrt{\frac{4}{2d}} \sqrt{ 1- \cos(\fsU{V}{\psi})}
    \end{align}
    and apply $1-\cos(\fsU{V}{\psi}) \leq 1$ for $\fsU{V}{\psi} \in [0,\frac{\pi}{2}]$, and the lower bound $1-\cos(\fsU{V}{\maxent}) \geq \frac{4}{\pi^2} \fsU{V}{\maxent}^2$~\cite{Bagul2022},
    \begin{align}
        &&\frac{2}{\pi} \fsU{V}{\maxent} &\leq \frac{d_F'(V, W_{\maxent})}{d_F'(V, W_{\psi})} \sqrt{\frac{4}{2d}}\\
        \Leftrightarrow 
        && \fsU{V}{\maxent} &\leq \frac{\pi}{2} \frac{d_F'(V, W_{\maxent})}{d_F'(V, W_{\psi})} \sqrt{\frac{2}{d}}\label{eq:before_asympt}.
    \end{align}
    By the assumption of statement (iii), $\frac{d_F'(V, W_{\psi})}{d_F'(V, W_{\maxent})} \in \lanom\left( \frac{1}{\mathrm{poly}(n)}\right)$ and therefore,
    \begin{align}
        \frac{d_F'(V, W_{\maxent})}{d_F'(V, W_{\psi})} \in \mathcal{O}\left( \mathrm{poly}(n)\right)\label{eq:reformated_assuption2}.
    \end{align}
    Thus, the upper bound for the angle is dominated by the dimension $d=2^n$ in the denominator in \Cref{eq:before_asympt}, therefore $\fsU{V}{\psi} \in \mathcal{O}\left( \frac{1}{\sqrt{d}}\right)$.
    Lastly, since $\sin(\fsU{V}{\psi})^2 \leq \fsU{V}{\psi}$, it holds that 
    \begin{align}
        L_{U^\dagger V, \maxent} = \sin(\fsU{V}{\psi})^2 \in \mathcal{O}\left( \frac{1}{\sqrt{d}}\right) = \mathcal{O}\left( \frac{1}{2^{n/2}}\right).
    \end{align}

\end{proof}

\section{Loss Improvement}\label{app:loss_improvement}

We derive the training sample-dependent improvement of the loss function value in a neighborhood from bounds on the maximal fidelity in the neighborhood.
In the proofs in this section, we require upper and lower bounds for $\sin(x)$ for $x \in [0,\pi]$, which are derived here for completeness.
In~\cite{Bagul2022}, a refinement of Kober's inequality was shown:
\begin{align}
    \cos(x) \leq 1 - \frac{4x^2}{\pi^2},
\end{align}
for $x \in [0,\frac{\pi}{2}]$.
Since both sides of this inequality are symmetric with respect to the $y$-axis, this inequality also holds for $[-\frac{\pi}{2},\frac{\pi}{2}]$.
Using $\sin(x) = \cos\left(x-\frac{\pi}{2}\right)$, it follows that for $x \in [0,\pi]$,
\begin{align}
    \sin(x) &= \cos\left(x-\frac{\pi}{2}\right)\\
    &\leq 1-\frac{4}{\pi^2}\left(x-\frac{\pi}{2}\right)^2\\
    &=1-\frac{4}{\pi^2}\left(x^2 - x\pi + \frac{\pi^2}{4}\right)\\
    &= \frac{4x}{\pi}\left( 1 - \frac{x}{\pi}\right),\label{eq:sin_parabola}
\end{align}
which is a parabola that equals $\sin(x)$ at the endpoints of the interval, as well as at the maximum.
Furthermore, as a lower bound, we use the linear approximations
\begin{align}
    \sin(x) \geq \begin{cases}
        \frac{2}{\pi}x & 0\leq x \leq \frac{\pi}{2}\\
        2-\frac{2}{\pi}x & \frac{\pi}{2} \leq x \leq \pi. \label{eq:sin_piecewise}
    \end{cases}.
\end{align}
Herein, the first case is Jordan's inequality~\cite{Qi2006} and the second case follows since $\sin(x)$ is symmetric with respect to the axis at $\frac{\pi}{2}$.

The improvement of the loss function is derived from the following result.
\repmaxfidtheorem*

\begin{proof}
    We begin by deriving the maximal fidelity when a separable state is used (\Cref{eq:max_fid_sep}).
    Let $\fsU{V}{\psi} = \arccos\left(\sqrt{F_{U,\psi}(V)}\right)$,
    according to \Cref{le:separable_distance}, for each $f_W \in [0,1]$, there is a unitary operator $\widetilde{W}$ with $F_{U,\psi}(\widetilde{W}) = f_W$ and 
    \begin{align}
        d_F'(V,\widetilde{W}) = \sqrt{4\left(1-\cos\left(\fsU{V}{\psi} - \fsU{\widetilde{W}}{\psi}\right)\right)},\label{eq:min_distance_repeat}
    \end{align}
    where $\fsU{\widetilde{W}}{\psi} = \arccos\left(\sqrt{F_{U,\psi}(\widetilde{W})}\right)$.
    Furthermore, this operator is of minimal distance in the set of all operators with given $f_W$.
    Thus, if $R$ is larger than the distance in \Cref{eq:min_distance_repeat}, then the ball $B(V,R)$ includes an operator $W$ with $F_{U,\psi}(W) = f_W$.
    Therefore, we find the maximal $f_W \in [0,1]$, or equivalently, the minimal $\fsU{W}{\psi}$, such that \Cref{le:separable_distance} applies for the given radius $R$, i.e., for $R$ and $\fsU{V}{\psi}$ we find the minmal $\fsU{W}{\psi}$ that satisfies 
    \begin{align}
        &&\sqrt{4\left(1-\cos\left(\fsU{V}{\psi} - \fsU{W}{\psi}\right)\right)} &\leq R\\
        \Leftrightarrow&&\cos\left(\fsU{V}{\psi} - \fsU{W}{\psi}\right) &\geq 1- \frac{R^2}{4}\\
        \Leftrightarrow&&\fsU{V}{\psi} - \fsU{W}{\psi} &\leq \arccos\left(1- \frac{R^2}{4}\right) = \beta_{\text{sep}}.\label{eq:fvfw_condition}
    \end{align}

    For the proof of the first case in \Cref{eq:max_fid_sep}, we assume $R = R_{\text{sep}}$ and therefore
    \begin{align}
        \beta_{\text{sep}} &= \arccos\left(1-\frac{R_{\text{sep}}^2}{4}\right)\\
        &= \arccos\left(1-\left(1-\sqrt{F_{U,\psi}(V)}\right)\right)\\
        &= \fsU{V}{\psi}.
    \end{align}
    In this case, \Cref{eq:fvfw_condition} is satisfied for the minimally possible Bures angle $\fsU{W}{\psi}=0$ and for any $\fsU{V}{\psi}$ since 
    \begin{align}
        \fsU{V}{\psi} - \fsU{W}{\psi} = \fsU{V}{\psi} = \beta_{\text{sep}}. 
    \end{align}
    Thus, if $R=R_{\text{sep}}$, there is $W \in B(V,R)$ with $\fsU{W}{\psi}=0$ and as a result, $F_{U,\psi}(W) = 1$.
    Moreover, since $R \geq R_{\text{sep}}$ implies $B(V,R_\text{sep}) \subseteq B(V,R)$, the statement also follows for larger balls.

    In the second case, $R < R_{\text{sep}}$, assume $F_{U,\psi}(W) = \cos\left(\fsU{V}{\psi} - \beta_{\text{sep}} \right)^2$ as in the statement of the theorem.
    The corresponding Bures angle is 
    \begin{align}
        \fsU{W}{\psi} &= \arccos\left(\cos\left(\fsU{V}{\psi} - \beta_{\text{sep}} \right)\right)\\
        &= \fsU{V}{\psi} - \beta_{\text{sep}}.
    \end{align}
    Therefore,
    \begin{align}
        \fsU{V}{\psi} - \fsU{W}{\psi} = \beta_{\text{sep}},
    \end{align}
    which satisfies \Cref{eq:fvfw_condition}.
    Thus, there is an operator $W \in B(V,R)$ with         $F_{U,\psi}(W) = \cos\left(\fsU{V}{\psi} - \beta_{\text{sep}} \right)^2$.
    Furthermore, $W$ is minimal w.r.t. $\fsU{W}{\psi}$ since $\fsU{W}{\psi} < \fsU{V}{\psi} - \beta_{\text{sep}}$ implies $\fsU{V}{\psi} - \fsU{W}{\psi} > \beta_{\text{sep}}$, which contradicts \Cref{eq:fvfw_condition}.

    For the maximally entangled case (\Cref{eq:max_fid_ent}) we observe that $R_{\text{ent}} = d_F'(U,V)$ (\Cref{eq:frob_relationship_fidelity}). 
    Thus if $R \geq R_{\text{ent}}$, then $B(V,R)$ contains $U$ and the maximal fidelity is therefore $F_{U,\maxent}(U) = 1$.
    For the case $R < R_{\text{ent}}$ assume, for the sake of contradiction, that there is some $\widetilde{W} \in B(V,R)$ with $F_{U,\maxent}(\widetilde{W}) > \cos(\fsU{V}{\maxent} - \beta_{\text{ent}})^2$.
    By the definition of the Bures angle, 
    \begin{align}
        \fsU{\widetilde{W}}{\maxent} &= \arccos\left( \sqrt{F_{U,\maxent}(\widetilde{W})}\right)\\
        &< \arccos\left( \cos(\fsU{V}{\maxent} - \beta_{\text{ent}})\right)\\
        &= \fsU{V}{\maxent} - \beta_{\text{ent}}\label{eq:angle_beta_relationship}.
    \end{align}
    Herein, the inequality in valid since $\arccos(x)$ is decreasing in $[-1,1]$.
    \Cref{le:min_dist_entangled} allows us to infer the minimal distance to any operator from $\fsU{\widetilde{W}}{\maxent}$. 
    By using this minimal distance as a lower bound, it follows for $\widetilde{W}$ that
    \begin{align}
        d_F'(V, \widetilde{W}) &\geq \sqrt{2d\left(1 - \cos(\fsU{V}{\maxent} - \fsU{\widetilde{W}}{\maxent})\right)}.
    \end{align}
    \Cref{eq:angle_beta_relationship} implies $\beta_{\text{ent}} < \fsU{V}{\maxent} - \fsU{\widetilde{W}}{\maxent}$ and since $\beta_{\text{ent}} \geq 0$, it follows that both sides of this inequality are positive.
    Furthermore, both sides are bounded above by $\pi/2$.
    In this interval, $\cos(x)$ is a decreasing function, therefore $\cos(\beta_{\text{ent}}) > \cos(\fsU{V}{\maxent} - \fsU{\widetilde{W}}{\maxent})$.
    Thus the distance between $V$ and $\widetilde{W}$ is bounded by
    \begin{align}
        d_F'(V,\widetilde{W}) &> \sqrt{2d\left(1 - \cos(\beta_{\text{ent}})\right)}\\
        &= \sqrt{2d\left(1 - \cos\left( \arccos\left(1 - \frac{R^2}{2d} \right)\right)\right)}\\
        &= \sqrt{2d\frac{R^2}{2d}}\\
        &= R,
    \end{align}
    which contradicts the assumption that $\widetilde{W} \in B(V,R)$. 
    Therefore if $\widetilde{W} \in B(V,R)$, we have $F_{U,\maxent}(\widetilde{W}) \leq \cos(\fsU{V}{\maxent} - \beta_{\text{ent}})^2$.
\end{proof}

From the bounds in \Cref{thm:max_fid_theorem}, we can infer the improvement $\imp_\arb(U,V,R)$ of the loss function in a ball $B(V,R)$.
For any state $\ket{\arb} \in \cH_X \otimes \cH_R$, the improvement as defined in \Cref{eq:improvement_def} can be expressed in terms of the Bures angle $\fsU{V}{\arb}$ using \Cref{eq:loss_sinus} as 
\begin{align}
    \imp_{\arb}(U,V,R) &:= L_{U,\arb}(V) - \min_{W \in B(V,R)} L_{U,\arb}(W)\\
    &= \sin(\fsU{V}{\arb})^2 - \min_{W \in B(V,R)} L_{U,\arb}(W).
\end{align}
For the separable state $\ket{\psi}$, we focus on the case where $R \leq R_{\text{sep}}$, since otherwise the minimal loss is zero according to \Cref{thm:max_fid_theorem}, and the improvement always matches the initial loss at $V$. 
Since the minimal loss is the complement of the maximal fidelity, \Cref{thm:max_fid_theorem} shows 
\begin{align}
    \min_{W \in B(V,R)} L_{U,\psi}(W) &= 1 - \max_{W \in B(V,R)} F((U^\dagger W \otimes I)\ket{\psi}, \ket{\psi})\\
    &= 1 - \cos(\fsU{V}{\psi} - \beta_{\text{sep}})^2\\
    &= \sin(\fsU{V}{\psi} - \beta_{\text{sep}})^2.
\end{align}
Therefore,
\begin{align}
    \imp_{\psi}(U,V,R) &= \sin(\fsU{V}{\psi})^2 - \sin(\fsU{V}{\psi} - \beta_{\text{sep}})^2.\label{eq:imp_sep_app}
\end{align}
By factorizing this expression and applying the identities $\sin(x)\pm\sin(y) = 2\sin(\frac{x\pm y}{2})\cos(\frac{x\mp y}{2})$ and $\sin(x)\cos(x) = \frac{1}{2}\sin(2x)$~\cite{Gelfand2001}, it is simplified to 
\begin{align}
    \sin(\fsU{V}{\psi})^2 - \sin(\fsU{V}{\psi} - \beta_{\text{sep}})^2 &= \left(\sin(\fsU{V}{\psi}) - \sin(\fsU{V}{\psi} - \beta_{\text{sep}})\right)\\
    &\qquad\left(\sin(\fsU{V}{\psi}) + \sin(\fsU{V}{\psi} - \beta_{\text{sep}})\right)\\
    & = 4\sin\left(\frac{\beta_{\text{sep}}}{2}\right)\cos\left(\frac{2\fsU{V}{\psi} - \beta_{\text{sep}}}{2}\right)\\
    &\qquad\sin\left(\frac{2\fsU{V}{\psi} - \beta_{\text{sep}}}{2}\right)\cos\left(\frac{\beta_{\text{sep}}}{2}\right)\\
    &= \sin(2\fsU{V}{\psi} - \beta_{\text{sep}}) \sin(\beta_{\text{sep}}).\label{eq:improvement_sin_sep}
\end{align}
Using the upper bound for the maximal fidelity in the entangled case (\Cref{eq:max_fid_ent} in \Cref{thm:max_fid_theorem}), a similar expression is derived as an upper bound for the improvement for $\ket{\maxent}$:
\begin{align}
    \imp_{\maxent}(U,V,R) &= \sin(\fsU{V}{\maxent})^2  - \min_{W \in B(V,R)} L_{U,\maxent}(W)\\
    &\leq \sin(\fsU{V}{\maxent})^2 - \sin(\fsU{V}{\maxent} - \beta_{\text{ent}})^2\\
    &= \sin(2\fsU{V}{\maxent} - \beta_{\text{ent}}) \sin(\beta_{\text{ent}}).\label{eq:improvement_sin_ent}
\end{align}

Both expressions for the improvement (\Cref{eq:improvement_sin_sep} and \Cref{eq:improvement_sin_ent}) depend on the loss at $V$, in form of the Bures angles $\fsU{V}{\psi}$ and $\fsU{V}{\maxent}$ and on the respective angles $\beta_{\text{sep}}$ and $\beta_{\text{ent}}$.
While $\beta_{\text{sep}}$ depends solely on the radius $R$, $\beta_{\text{ent}}$ additionally depends on the dimension $d$ of the Hilbert space $\cH_X$.
Since $\sin(\arccos(x)) = \sqrt{1-\cos(\arccos(x))^2} = \sqrt{1-x^2}$, we have for the factor $\sin(\beta_{\text{ent}})$,
\begin{align}
    \sin(\beta_{\text{ent}}) &= \sin\left(\arccos\left(1-\frac{R^2}{2d}\right)\right)\\
    &= \sqrt{1 - \left(1-\frac{R^2}{2d}\right)^2}\\
    &= \sqrt{\frac{2R^2}{2d} - \frac{R^4}{4d^2}}\\
    &\leq \frac{R}{\sqrt{d}}.\label{eq:sin_beta_bound}
\end{align}

Since $d$ grows exponentially with the number of qubits in $\cH_X$, \Cref{eq:sin_beta_bound} indicates exponentially small improvement if a maximally entangled state is used for training.
However, \Cref{eq:improvement_sin_sep} and \Cref{eq:improvement_sin_ent} still depend on the Bures angles at the current solution $V$, we proceed by comparing the expressions for the improvement under the assumption that $\fsU{V}{\maxent} = \fsU{V}{\psi}$, i.e.~that the losses at $V$ match, in the following theorem.

\repsamestartvaluetheorem*
\begin{proof}
We first compare $\sin(\beta_{\text{sep}})$ and $\sin(\beta_{\text{ent}})$. 
From the assumptions of the theorem and the upper bound for the loss $L \leq 1$, $R \leq \sqrt{4} = 2$ follows, which implies $\frac{R^4}{16} \leq \frac{R^2}{4}$.
Thus, 
\begin{align}
    \sin(\beta_{\text{sep}}) &= \sin\left(\arccos\left(1-\frac{R^2}{4}\right)\right)\\
    &= \sqrt{\frac{2R^2}{4} - \frac{R^4}{16}}\\
    &\geq \sqrt{\frac{2R^2}{4} - \frac{R^2}{4}}\\
    &= \frac{R}{2}.
\end{align}
Therefore, using the upper bound in \Cref{eq:sin_beta_bound},
\begin{align}
    \sin(\beta_{\text{sep}}) \geq \frac{R}{2} = \frac{\sqrt{d}}{2} \frac{R}{\sqrt{d}} \geq \frac{\sqrt{d}}{2} \sin(\beta_{\text{ent}}).\label{eq:beta_relationship}
\end{align}
According to the assumptions of the theorem, $L_{U,\psi}(V_\psi) = L_{U,\maxent}(V_\maxent) = L$.
Thus, also the Bures angles $\fsU{V}{\maxent}$ and $ \fsU{V}{\psi}$ are equal and we denote this common angle as $\gamma := \arccos\left( \sqrt{1-L}\right)$.
By \Cref{eq:imp_sep_app}, \Cref{eq:improvement_sin_sep} and \Cref{eq:improvement_sin_ent} and using the relationship in \Cref{eq:beta_relationship},
\begin{align}
    \frac{\imp_{\maxent}(U,V_\maxent,R)}{\imp_{\psi}(U,V_\psi,R)} &\leq \frac{\sin(2\gamma - \beta_{\text{ent}})}{\sin(2\gamma - \beta_{\text{sep}})} \frac{\sin(\beta_{\text{ent}})}{\sin(\beta_{\text{sep}})}\\
    & \leq \frac{\sin(2\gamma - \beta_{\text{ent}})}{\sin(2\gamma - \beta_{\text{sep}})} \frac{2}{\sqrt{d}}.\label{eq:improvmenet_reform_1}
\end{align}
We proceed by upper-bounding the numerator and lower-bounding the denominator to get an upper bound for the ratio of the improvement.
From the assumption of the theorem, 
\begin{align}
    R \leq \sqrt{4(1 - \sqrt{1-L})} = \sqrt{4(1-\cos(\gamma))},\label{eq:R_cos_bound}
\end{align}
it follows that 
\begin{align}
    \cos(\gamma) = 1-\frac{4(1-\cos(\gamma))}{4} \leq  1-\frac{R^2}{4} \leq 1.
\end{align}
We apply this relationship to give the range for $\beta_{\text{sep}} := \arccos\left(1 - \frac{R^2}{4}  \right)$.
For $\gamma \in [0,\frac{\pi}{2}]$, it is $\cos(\gamma) \geq 0$.
Furthermore, since $\arccos(x)$ is a decreasing function, it is maximal at the lower end of the range of its argument, therefore
\begin{align}
    \beta_{\text{sep}} := \arccos\left(1 - \frac{R^2}{4}  \right) \in [0, \arccos(\cos(\gamma))] = [0, \gamma].
\end{align}
Therefore, the argument of the sine function in the denominator of \Cref{eq:improvmenet_reform_1} is constrained to the positive part of $\sin(x)$:
\begin{align}
    2\gamma - \beta_{\text{sep}} \in [2\gamma - \gamma, 2\gamma -0] = [\gamma, 2\gamma] \subseteq [0, \pi],\label{eq:betasep_bounds}
\end{align}
where the last inclusion follows from the range of the angle $\gamma \in [0, \pi/2]$.
For the entangled counterpart, the definition of $\beta_{\text{ent}}$ and the radius $R<2$ (see \Cref{eq:R_cos_bound}) show 
\begin{align}
    0 \leq \beta_{\text{ent}} = \arccos\left( 1-\frac{R^2}{4}\right)
    \leq \arccos\left( 1-\frac{R^2}{2d}\right) = \beta_{\text{sep}}
\end{align}
Thus, we infer $2\gamma - \beta_{\text{ent}} \in [0,\pi]$ similar to the separable case.

To upper bound the numerator in \Cref{eq:improvmenet_reform_1}, we use $\sin(x) \leq \frac{4x}{\pi}\left(1-\frac{x}{\pi}\right)$ (\Cref{eq:sin_parabola}):
\begin{align}
    \sin(2\gamma - \beta_{\text{ent}}) \leq \frac{4(2\gamma - \beta_{\text{ent}})}{\pi}\left(1-\frac{2\gamma - \beta_{\text{ent}}}{\pi} \right).
\end{align}
For the denominator in \Cref{eq:improvmenet_reform_1}, we use the piecewise approximation in \Cref{eq:sin_piecewise}:
\begin{align}
    \sin(2\gamma - \beta_{\text{sep}}) \geq \begin{cases}
        \frac{2}{\pi}\left(2\gamma - \beta_{\text{sep}}\right) & 0\leq 2\gamma - \beta_{\text{sep}} \leq \frac{\pi}{2}\\
        2-\frac{2}{\pi}\left(2\gamma - \beta_{\text{sep}}\right) & \frac{\pi}{2} \leq 2\gamma - \beta_{\text{sep}} \leq \pi. \label{eq:sin_piecewise}
    \end{cases}
\end{align}
To show the statement of the theorem, we examine both cases in \Cref{eq:sin_piecewise}. 
For $0\leq 2\gamma - \beta_{\text{sep}} \leq \frac{\pi}{2}$,
\begin{align}
    \frac{\imp_{\maxent}(U,V_\maxent,R)}{\imp_{\psi}(U,V_\psi,R)} &\leq \frac{\sin(2\gamma - \beta_{\text{ent}})}{\sin(2\gamma - \beta_{\text{sep}})} \frac{2}{\sqrt{d}}\\
    &\leq \frac{\frac{4(2\gamma - \beta_{\text{ent}})}{\pi}\left(1-\frac{2\gamma - \beta_{\text{ent}}}{\pi} \right)}{\frac{2}{\pi}\left(2\gamma - \beta_{\text{sep}}\right)}\frac{2}{\sqrt{d}}\\
    &\leq \frac{\frac{4\cdot 2\gamma}{\pi} (1-\frac{0}{\pi})}{\frac{2}{\pi}\left(2\gamma - \beta_{\text{sep}}\right)}\frac{2}{\sqrt{d}}\\
    & = \frac{4\cdot 2\gamma}{2\left(2\gamma - \beta_{\text{sep}}\right)}\frac{2}{\sqrt{d}}\\
    & \leq \frac{4\gamma}{\gamma}\frac{2}{\sqrt{d}}\\
    &= \frac{8}{\sqrt{d}} \in \mathcal{O}\left(\frac{1}{\sqrt{d}}\right).
\end{align}
Herein, the third inequality follows from $2\gamma - \beta_{\text{ent}} \geq 0$ and from $\beta_{\text{ent}} \geq 0$.
Similarly, the fourth inequality uses $2\gamma - \beta_{\text{sep}} \geq \gamma$ (\Cref{eq:betasep_bounds}).
For the second case in \Cref{eq:sin_piecewise},
\begin{align}
    \frac{\imp_{\maxent}(U,V_\maxent,R)}{\imp_{\psi}(U,V_\psi,R)} &\leq \frac{\frac{4(2\gamma - \beta_{\text{ent}})}{\pi}\left(1-\frac{2\gamma - \beta_{\text{ent}}}{\pi} \right)}{2-\frac{2}{\pi}\left(2\gamma - \beta_{\text{sep}}\right)}\frac{2}{\sqrt{d}}\\
    &\leq \frac{\frac{4\pi}{\pi}\left(1-\frac{1}{\pi}\left(2\gamma - \beta_{\text{ent}}\right) \right)}{2\left(1-\frac{1}{\pi}\left(2\gamma - \beta_{\text{sep}}\right)\right)}\frac{2}{\sqrt{d}}\label{eq:second_case_1}\\
    &= \frac{1-\frac{2\gamma}{\pi} + \frac{\beta_{\text{ent}}}{\pi}}{1-\frac{2\gamma}{\pi} + \frac{\beta_{\text{sep}}}{\pi}}\frac{4}{\sqrt{d}}\\
    &\leq\frac{1-\frac{2\gamma}{\pi} + \frac{\beta_{\text{sep}}}{\pi}}{1-\frac{2\gamma}{\pi} + \frac{\beta_{\text{sep}}}{\pi}}\frac{4}{\sqrt{d}}\label{eq:second_case_2}\\
    & = \frac{4}{\sqrt{d}}\in \mathcal{O}\left(\frac{1}{\sqrt{d}}\right).
\end{align}
In \Cref{eq:second_case_1}, we apply $2\gamma - \beta_{\text{ent}} \leq 2\gamma \leq \pi$ and in \Cref{eq:second_case_2}, we make use of $\beta_{\text{ent}} \leq \beta_{\text{sep}}$.
Since $d=2^n$, 
\begin{align}
    \frac{\imp_{\maxent}(U,V_\maxent,R)}{\imp_{\psi}(U,V_\psi,R)} \in \mathcal{O}\left(\frac{1}{2^{n/2}}\right)
\end{align}
in both cases.
\end{proof}

\section{Used PQCs and their Expressivity}\label{app:expressivity}
\begin{figure}
    \centering
        \begin{subfigure}[b]{0.375\textwidth}
            \centering
            \includegraphics[scale=0.8]{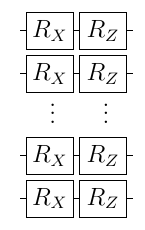}
            \caption{no-entanglement}   
            \label{fig:diag_no_entanglement}
        \end{subfigure}
        \hfill
        \begin{subfigure}[b]{0.575\textwidth}  
            \centering 
            \includegraphics[scale=0.8]{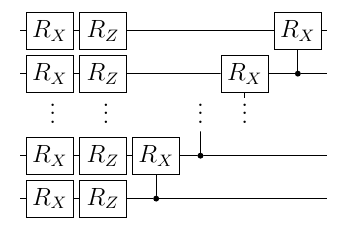}
            \caption{$CR_X$-entanglement}   
            \label{fig:diag_crx_entanglement}
        \end{subfigure}
        \vskip\baselineskip
        \begin{subfigure}[b]{0.375\textwidth}   
            \centering 
            \includegraphics[scale=0.8]{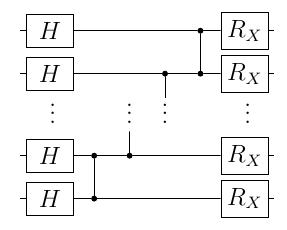}
            \caption{$CZ$-entanglement}   
            \label{fig:diag_cz_entanglement}
        \end{subfigure}
        \hfill
        \begin{subfigure}[b]{0.575\textwidth}   
            \centering 
            \includegraphics[scale=0.8]{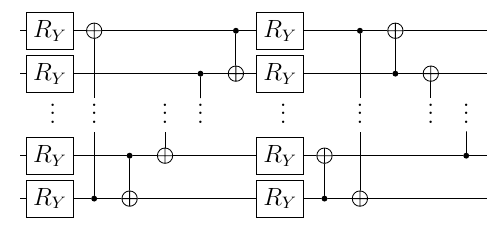}
            \caption{circular-entanglement}   
            \label{fig:diag_circ_entanglement}
        \end{subfigure}
        \caption{Circuit diagrams of the PQCs evaluated in this work. Each contained rotation gate ($R_X, R_Y, R_Z$) is parameterized, and the parameters are used for optimization.
        The circuit labels in this figure are used in the main text as a shorthand to refer to the PQCs.} 
        \label{fig:circ_diagrams}
\end{figure}

For our experiments in \Cref{sec:exp}, we utilize a subset of the circuits evaluated in~\cite{Sim2019}, selected to cover a range of entanglement structures, including circuits without entanglement, with $CR_X$ entanglement, $CZ$ entanglement, and circular entanglement.
\Cref{fig:circ_diagrams} shows the circuit diagram of each PQC, along with the label that we use in the main text to refer to the PQC.
To evaluate and compare the expressivity in our experiments, we follow the approach proposed in~\cite{Sim2019}.
The expressivity of a PQC $V(\vec{\theta})$ is experimentally estimated by comparing the probability distributions of the fidelity of Haar random states with the distribution of fidelities of states obtained from randomly selecting PQC parameters. 
Formally, it is estimated as the Kullback-Leibler divergence 
\begin{align}
    \mathrm{Expr} := D_{KL}\left( \hat{P}_V (F, \vec{\theta})\;\lVert\; P_{\text{Haar}}(F)\right)
\end{align}
of the estimated probability distribution $\hat{P}_V(F, \vec{\theta})$ of fidelities 
\begin{align}
    F=\left|\braket{0|V(\vec{\theta})^\dagger V(\vec{\rho}) | 0}\right|^2,\label{eq:pqc_fid}
\end{align}
when $\vec{\theta}$ and $\vec{\rho}$ are sampled uniformly at random and the probability distribution $P_{\text{Haar}}(F)$ of fidelities $F=\left|\braket{\psi|\phi}\right|^2$ of Haar-random states $\ket{\psi}$ and $\ket{\phi}$~\cite{Sim2019,Nakaji2021}.

To estimate the expressivity for the circuits used in the experiments, we sample $N$ pairs of parameter assignments $\vec{\theta}$ and $\vec{\rho}$ and compute the fidelity according to \Cref{eq:pqc_fid}.
We classify the results into $N_{\text{bins}}=75$ equally sized bins $[a_i,b_i) \subset [0,1]$, with the last bin closed to cover the full interval $[0,1]$, to obtain a histogram for the distribution of fidelities.
Since the distribution of fidelities of Haar random states is known to be $P_{\text{Haar}}(F) = (d-1)(1-F)^{d-2}$ for the Hilbert space of dimension $d$~\cite{Zyczkowski2005, Sim2019, Khatri2020}, the expected probability for each bin is given as 
\begin{align}
    \mathbb{P}_{\text{Haar}}(a \leq F < b) = \int_a^b (d-1)(1-F)^{d-2} d F = (1-a)^{d-1} - (1-b)^{d-1}.
\end{align}
The resulting experimentally evaluated expressivity is calculated as 
\begin{align}
    \mathrm{Expr} &:= D_{KL}\left( \hat{P}_V (F, \vec{\theta})\;\lVert\; P_{\text{Haar}}(F)\right)\\
    &= \sum_{[a_i, b_i)} \mathbb{P}_{\text{exp}}(a_i \leq F < b_i) \ln\left(\frac{\mathbb{P}_{\text{exp}}(a_i \leq F < b_i)}{\mathbb{P}_{\text{Haar}}(a_i \leq F < b_i)}\right),
\end{align}
where $\mathbb{P}_{\text{exp}}(a_i \leq F < b_i)$ is the experimentally evaluated probability for $F \in [a_i, b_i)$ and the sum is taken over all bins $[a_i, b_i)$. 
This measure of the expressivity is valued between $0$ for highly expressive circuits and $(d-1)\ln(N_\text{bins})$ for the lowest possible expressivity~\cite{Sim2019}.

\begin{figure}
    \centering
    \includegraphics[scale=1]{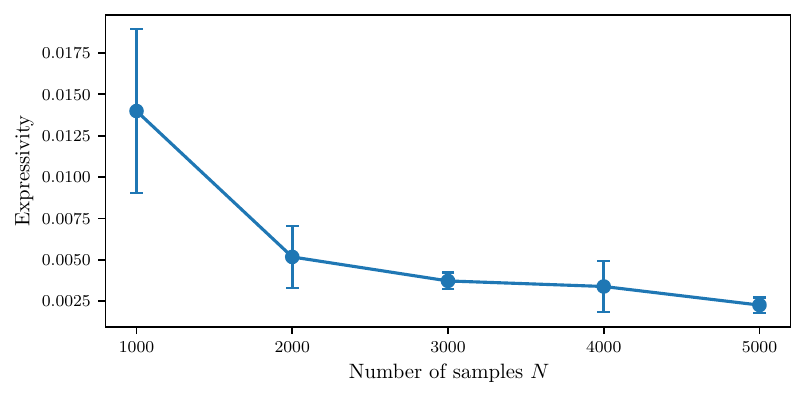}
    \caption{Evaluated expressivity for \circfour with $l=8$ layers depending on the number of fidelity-samples $N$.
    The markers indicate the average expressivity value across five evaluations, and the bars show the standard deviation.}
    \label{fig:expressivity_convergence}
\end{figure}
\begin{figure}
    \centering
    \includegraphics[scale=1]{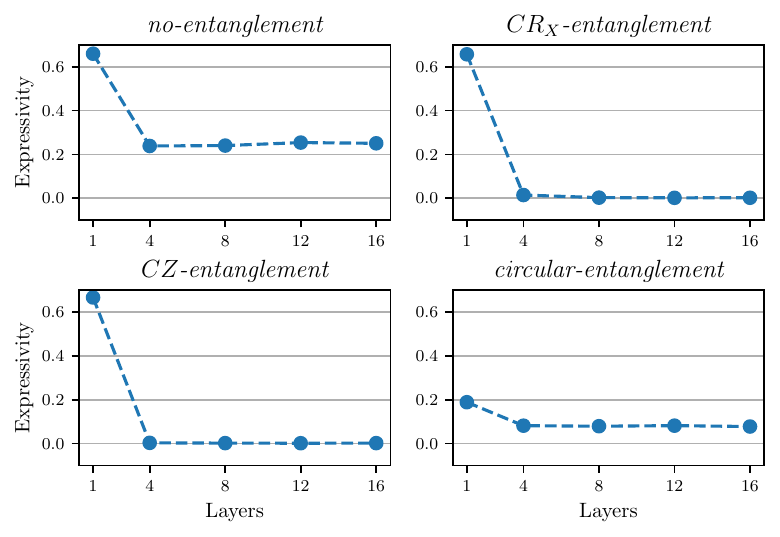}
    \caption{Expressivity of the circuits from \Cref{fig:circ_diagrams} used in \Cref{sec:exp}. Lower values correspond to a higher expressivity.}
    \label{fig:expressibilities}
\end{figure}
The obtained value for the expressivity is sensitive to the number of samples $N$~\cite{Sim2019}.
Therefore, as a preliminary test, we evaluated the convergence behavior of the expressivity calculation depending on $N$. 
An example of this evaluation is given in \Cref{fig:expressivity_convergence}, and we found that, similar to the case for four qubits in~\cite{Sim2019}, at $N=5000$ samples the evaluation generally converges.
Therefore, we use $5000$ samples each for the evaluation of the expressivity of the PQCs in our work.

\Cref{fig:expressibilities} shows the expressivity of the circuits that are used in our experiments. 
The PQCs \circfour and \circnine exhibit high expressivity at $l\geq 4$ with values close to zero, while \circone has lower expressivity and saturates at $\mathrm{Expr}\approx 0.25$.
Lastly, \circfifteen has higher expressivity at $l=1$ with $\mathrm{Expr}\approx 0.2$; however, increasing the number of layers does not considerably improve its expressivity as it saturates at $\mathrm{Expr}\approx 0.1$.

\end{document}